\documentclass[journal]{IEEEtran}
\usepackage{graphicx}
\usepackage{booktabs}
\usepackage{cite}
\usepackage{capt-of}
\usepackage{algorithm}
\usepackage{algpseudocode}

\usepackage{amsmath,amssymb,amsfonts,bm}
\usepackage{subcaption}
\usepackage{booktabs}
\usepackage{textcomp}
\usepackage{xcolor}
\usepackage{multirow}

\newtheorem{proposition}{Proposition}
\newtheorem{theorem}{Theorem}

\newtheorem{remark}{Remark}
\newtheorem{lemma}{Lemma}

\newcommand{\pmse}[1]{{\scriptsize$\,\pm#1$}}

\long\def\comment#1{}

\newfont{\bbb}{msbm10 scaled 700}

\newfont{\bb}{msbm10 scaled 1100}

\begin{document}

\title{Online Movable Antenna Repositioning Under Movement Delay and a Long-Term Energy Budget}

\author{Yunseob Tae,~\IEEEmembership{Graduate Student Member,~IEEE,} Jeongjae Lee,~\IEEEmembership{Member,~IEEE,} and~Songnam~Hong,~\IEEEmembership{Senior Member,~IEEE}
        \thanks{Y. Tae and S. Hong are with the Department of Electronic Engineering, Hanyang University, Seoul, Korea. (e-mail: \{tys7524, snhong\}@hanyang.ac.kr).}

        \thanks{J. Lee is with the Department of Electronic Engineering, Hanyang University, Seoul, Korea, and the Ming Hsieh Department of Electrical Engineering, University of Southern California, Los Angeles, CA, USA. (e-mail: jl\_939@usc.edu).}
        
        \thanks{This work was supported by the NRF (No. RS-2024-00409492) and the IITP
        (IITP-2026-RS-2023-00253914), funded by the Korea government (MSIT).}
}

\maketitle

\begin{abstract}
Movable antenna (MA) systems improve the channel by repositioning antennas, but each repositioning interrupts data transmission for a time set by the largest displacement and consumes actuator energy that accumulates over all displacements. Existing designs optimize the positions within a single transmission block, whereas the movement energy is limited over the long term and the positions chosen in one slot constrain those reachable in the next. We study online throughput maximization under the movement delay and a long-term energy budget. Using the Lyapunov drift-plus-penalty framework, we convert the long-term problem into per-slot problems and solve each by a proximal minorization--maximization algorithm that requires no channel statistics and keeps an antenna stationary whenever its relocation is not worth the energy. We prove that the budget is met for every channel realization and that the algorithm attains, within a controllable penalty, the throughput gain of reallocating the budget over time. Simulations show that this reallocation improves the throughput over imposing the budget slot by slot when the demand for repositioning is concentrated in time, whereas the latter, supported by the same solver, is preferable when the channel varies continuously, at the cost of an order of magnitude more antenna actuation.
\end{abstract}

\begin{IEEEkeywords}
Movable antenna, movement delay, movement energy, Lyapunov optimization, online optimization.
\end{IEEEkeywords}

\section{Introduction}


Multi-antenna technology has been regarded as a key enabler for the high data rates and spectral efficiency required in next-generation wireless communication systems~\cite{Larsson2014}. However, as these systems move toward higher frequency bands, the channels become dominated by a few propagation paths, and resolving users with similar angles by fixed-position antenna (FPA) arrays requires larger apertures and more antenna elements, which increase the number of radio-frequency (RF) chains, the hardware cost, and the power consumption. Movable antennas (MAs) have recently emerged as a promising alternative to address this issue~\cite{zhu2023movable}. Unlike FPA arrays with fixed geometries, MA arrays allow the antenna positions to be adjusted within a given region, e.g., through motor-driven actuators with flexible cables connected to the RF chains. This reconfigurability provides additional spatial degrees of freedom (DoFs) that enable flexible control of the array response without increasing the number of antennas. Related concepts have also been studied under the name of fluid antenna systems~\cite{cheng2024sum}. Accordingly, MA arrays have been exploited in various designs, including multi-beam forming~\cite{ma2024multi}, sum-rate maximization~\cite{feng2024weighted}, secure communications~\cite{tang2025secure}, and near-field communications~\cite{zhu2025near}.

Despite these advantages, MA systems face practical challenges arising from the mechanical repositioning of antennas~\cite{ning2025movable}. First, repositioning incurs a \emph{movement delay}, during which data transmission is interrupted. This delay becomes critical as the channel coherence time shortens with higher carrier frequencies and user mobility, since repositioning may then occupy a non-negligible portion of each transmission block and degrade the effective throughput. Second, repositioning consumes \emph{movement energy} in the motor-driven actuators, which adds to the power consumption of the base station~(BS) and must be kept within a given budget. Notably, these two costs have different structures: when all antennas move simultaneously, the movement delay is determined by the maximum displacement, whereas the movement energy accumulates over the displacements of all antennas.

\subsection{Related Work}\label{subsec:related}
Several recent studies have addressed the movement delay in MA systems. In~\cite{li2025trajectory}, the movement delay is minimized by optimizing the association between the initial and destination antenna positions and the corresponding trajectories under inter-antenna distance and moving-direction constraints. Although this approach effectively reduces the repositioning time, the destination positions are assumed to be given, and the communication performance is not jointly optimized with the antenna positions. The movement
delay was incorporated into throughput maximization for MA-enabled multiuser downlink communications within a given transmission block in~\cite{wang2025throughput}, which considers user-side MAs and maximizes the minimum throughput without accounting for the movement energy.

The movement energy has also been incorporated into MA designs. For stepper-motor-driven MAs, a mechanical power consumption model was developed based on electric motor theory, and the energy efficiency was maximized for a single MA~\cite{wei2025mechanical} and for a BS-side MA array serving multiple users, where the antenna positions, moving speeds, and precoding are jointly optimized~\cite{wei2026energy}. The design most closely related to this paper is~\cite{ding2025energy}, which models the movement energy as proportional to the displacement and maximizes the minimum energy efficiency of multiuser uplink communications with user-side MAs by accounting for both the movement delay and energy. These studies establish the cost structure adopted in this paper, in which the movement delay is determined by the maximum displacement and the movement energy accumulates over all displacements. However, they treat the movement energy as a cost within a single transmission block, e.g., in the denominator of the energy efficiency.

Consequently, the above designs optimize the antenna positions within a single transmission block and thus do not address the continual repositioning of MAs over time, which gives rise to two additional aspects. First, the antenna positions chosen in one slot determine the reachable positions and the movement cost in subsequent slots, so that the repositioning decisions are coupled across slots and cannot be optimized independently for each block. Second, the movement energy is limited over a long period, e.g., by the average power available to the actuators, rather than within each slot. Applying a single-block design repeatedly with a fixed energy budget in every slot can be overly conservative when the benefit of repositioning varies over time. This is the case whenever the channel is reconfigured by \emph{discrete events} rather than drifting continuously, as when the set of scheduled users changes or a blockage appears: the geometry is then nearly static between events, and repositioning is worthwhile only just after one. The two-timescale design in~\cite{zheng2025two} reduces how often the antennas move by optimizing their positions on a large timescale from statistical channel state information (CSI), while the precoder adapts to the instantaneous CSI; however, its update period is fixed in advance and the movement delay and energy are not accounted for, so it cannot concentrate the repositioning right after such events either. A long-term budget instead allows the movement energy to be saved when repositioning is less beneficial and spent when it yields a substantial rate improvement, e.g., after an abrupt change of the favorable antenna positions.

\subsection{Our Contributions}
\label{subsec:contributions}

To address these aspects, this paper studies the online design of MA positions that maximizes the \emph{time-averaged} effective throughput under movement delay and a long-term movement energy budget. The coupling across slots is captured by making the feasible positions and the movement cost depend on the previous antenna positions, and the energy budget is managed over time through a virtual queue that tracks the accumulated energy consumption and adaptively prices the movement energy in each slot. The main contributions are summarized as follows.
\begin{itemize}
  \item To the best of our knowledge, this is the first work that optimizes MA positions online under movement delay and a time-averaged movement energy constraint, where the antenna positions are coupled across slots
  through the reachable positions and the movement cost.

  \item We develop the Lyapunov-based online movable-antenna optimization (LOMA) algorithm, which requires no statistical knowledge of the channel. Its per-slot proximal minorization--maximization iterations monotonically improve the objective, converge to a stationary point, and promote sparse repositioning.

  \item Although the coupling across slots precludes the standard optimality analysis of the Lyapunov framework, we prove that LOMA satisfies the energy budget for every channel realization. We further prove that LOMA attains, along its own trajectory, the throughput gain of reallocating the energy budget across slots, within a penalty that decays with the control parameter of the Lyapunov framework, up to a term reflecting the non-concavity of the per-slot problem.
  
 \item Simulation results identify when this reallocation is beneficial: managing the budget over time improves the throughput when the demand for repositioning is concentrated in time, whereas imposing it slot by slot, which the same solver supports, is preferable when the demand is continuous, at the cost of actuating the antennas an order of magnitude more often. In each regime, the design suited to it meets the energy budget and clearly outperforms designs that do not reposition the antennas.
  
\end{itemize}

\subsection{Organization and Notation}\label{subsec:organization}

The remainder of this paper is organized as follows. Section~\ref{sec:system} describes the system model, and Section~\ref{sec:formulation} formulates the long-term throughput maximization problem. Section~\ref{sec:framework} transforms the problem into per-slot problems via the Lyapunov framework, and Section~\ref{sec:algorithm} develops the LOMA algorithm and its performance guarantees. Section~\ref{sec:simulation} presents simulation results, and Section~\ref{sec:conclusion} concludes the paper.

\textit{Notation:} Boldface lowercase and uppercase letters denote vectors and matrices, respectively, and $\mathbf{0}$ and $\mathbf{1}$ denote the all-zero and all-ones vectors of appropriate dimensions. $(\cdot)^{\mathrm{T}}$ and $(\cdot)^{\mathrm{H}}$ denote the transpose and conjugate transpose, respectively. $\|\cdot\|_{1}$, $\|\cdot\|_{2}$, and $\|\cdot\|_{\infty}$ denote the $\ell_{1}$, $\ell_{2}$, and $\ell_{\infty}$ norms, respectively, $|\cdot|$ denotes the absolute value of a scalar or the cardinality of a set, and $[\,\cdot\,]^{+}\triangleq\max\{\cdot,0\}$. $\mathbb{E}[\cdot]$ denotes the expectation, and $\mathcal{O}(\cdot)$ denotes the standard big-O notation. $\mathcal{N}(0,\sigma^{2})$ and $\mathcal{CN}(0,\sigma^{2})$ denote the real and circularly symmetric complex Gaussian distributions with zero mean and variance $\sigma^{2}$, respectively, and $\mathcal{U}[a,b]$ denotes the uniform distribution over $[a,b]$.

\section{System Model}\label{sec:system}

We consider a downlink MA-aided system, in which a BS equipped with $M$ MAs, each driven by an independent mechanical actuator, serves $K$ single-antenna user equipments (UEs). As illustrated in Fig.~\ref{fig:system}, the system operates over time slots indexed by $t\in\{1,2,\dots\}$, each of duration $T_s$, within which the channel is assumed to remain constant. Let $\mathbf{x}_t=[\,x_{t,1}\;x_{t,2}\;\cdots\;x_{t,M}\,]^{\mathrm{T}}\in\mathbb{R}^{M}$ denote the antenna position vector (APV) at slot $t$, and let $\mathbf{x}_0$ denote the initial APV. The MAs are indexed in ascending order of their positions, i.e., $0\le x_{t,1}<x_{t,2}<\cdots<x_{t,M}\le D$, where $D$ is the length of the linear region within which the MAs can be positioned.

At the beginning of each slot, the MAs are mechanically repositioned from $\mathbf{x}_{t-1}$ to $\mathbf{x}_t$. Data transmission is suspended during this repositioning because antenna motion induces Doppler shifts and rapid channel variations~\cite{wang2025throughput}, and the time it occupies is referred to as the movement delay. Since all MAs move simultaneously at a common speed $v$, the movement delay of slot $t$ is determined by the \emph{maximum} displacement, i.e.,
\begin{equation}
  \tau^{\mathrm{mov}}_{t}\triangleq
  \frac{\|\mathbf{x}_t-\mathbf{x}_{t-1}\|_{\infty}}{v},
  \label{eq:mov_delay}
\end{equation}
and the effective data transmission time of slot $t$ is $T^{\mathrm{data}}_{t}=T_s-\tau^{\mathrm{mov}}_{t}$. In contrast, the mechanical energy for repositioning is consumed by \emph{every} actuator and thus accumulates over all displacements, as modeled in Section~\ref{sec:formulation}. A larger displacement therefore incurs both a higher movement energy and a shorter transmission time.

\begin{figure}[t]
\centering
\includegraphics[width=0.95\columnwidth]{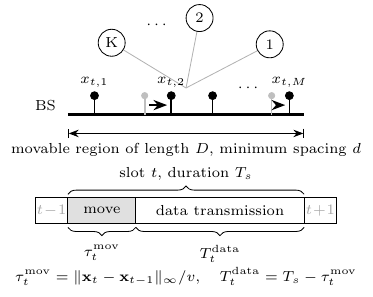}
\caption{System model and slot structure.}
\label{fig:system}
\end{figure}

\subsection{Channel Model}\label{subsec:channel}

We adopt the far-field channel model, in which the array response vector $\mathbf{a}(\mathbf{x},\theta)\in\mathbb{C}^{M}$ is determined jointly by the APV $\mathbf{x}$ and the angle of departure (AoD) $\theta\in[0,\pi]$ as $\mathbf{a}(\mathbf{x},\theta)=[\,e^{\jmath\frac{2\pi}{\lambda} x_{1}\cos\theta}\;\cdots\;e^{\jmath\frac{2\pi}{\lambda}x_{M}\cos\theta}\,]^{\mathrm{T}}\in\mathbb{C}^{M}$,
where $\lambda$ denotes the carrier wavelength. The channel from the BS to UE $k$ consists of one line-of-sight (LoS) path and $L_{k}$ non-line-of-sight (NLoS) paths, i.e.,
\begin{equation}
  \mathbf{h}_{k}\big(\mathbf{x};\bm{\xi}_{k}\big)
  =\alpha^{\mathrm{LoS}}_{k}\,
    \mathbf{a}\big(\mathbf{x},\theta^{\mathrm{LoS}}_{k}\big)
  +\sum_{\ell=1}^{L_{k}}\alpha^{\mathrm{NLoS}}_{k,\ell}\,
    \mathbf{a}\big(\mathbf{x},\theta^{\mathrm{NLoS}}_{k,\ell}\big),
  \label{eq:channel}
\end{equation}
where $\bm{\xi}_{k}\triangleq\{\alpha^{\mathrm{LoS}}_{k},\theta^{\mathrm{LoS}}_{k}, \{\alpha^{\mathrm{NLoS}}_{k,\ell},\theta^{\mathrm{NLoS}}_{k,\ell}\}_{\ell=1}^{L_{k}}\}$ collects the field-response parameters of UE $k$, and $\alpha^{(\cdot)}_{k}$ and $\theta^{(\cdot)}_{k}$ denote the complex gain and the AoD of the corresponding path, respectively. Let $\beta_{k}$ and $\kappa_{k}$ denote the
large-scale path gain and the Rician factor of UE $k$, respectively. The power fractions of the LoS path and of each NLoS path are $\rho^{\mathrm{LoS}}_{k}=\kappa_{k}/(\kappa_{k}+1)$ and $\rho^{\mathrm{NLoS}}_{k}=1/\big((\kappa_{k}+1)L_{k}\big)$, respectively. The $M\times K$ channel matrix is given by
\begin{equation}
  \mathbf{H}\big(\mathbf{x};\bm{\xi}\big)
  =\big[\,\mathbf{h}_{1}(\mathbf{x};\bm{\xi}_{1})\;\;
    \mathbf{h}_{2}(\mathbf{x};\bm{\xi}_{2})\;\;\cdots\;\;
    \mathbf{h}_{K}(\mathbf{x};\bm{\xi}_{K})\,\big],
  \label{eq:composite_channel}
\end{equation}
with $\bm{\xi}\triangleq\{\bm{\xi}_{k}\}_{k=1}^{K}$. Unlike in conventional FPA systems, the channel in \eqref{eq:channel} is an explicit function of the APV, which enables channel reconfiguration through repositioning but also renders the achievable rate non-convex in $\mathbf{x}$.

\subsection{Temporal Evolution}\label{subsec:temporal}

We now specify how the field-response parameters
$\bm{\xi}_{k}=\{\alpha^{\mathrm{LoS}}_{k},\theta^{\mathrm{LoS}}_{k},
\{\alpha^{\mathrm{NLoS}}_{k,\ell},\theta^{\mathrm{NLoS}}_{k,\ell}
\}_{\ell=1}^{L_{k}}\}$ evolve across slots. We adopt a geometry-based model in which the continuous variation is driven by the UE velocity, so that the AoDs, the LoS phase, and the small-scale fading evolve consistently with the same mobility.

\noindent{\bf UE Mobility.}
The BS is located at the origin, with the movable region aligned with the horizontal axis, and UE $k$ is located at $\mathbf{p}^{(t)}_{k}=[\,p^{(t)}_{k,1}\;p^{(t)}_{k,2}\,]^{\mathrm{T}}\in\mathbb{R}^{2}$ with $p^{(t)}_{k,2}>0$. Each UE moves at a constant speed $v_{\mathrm{UE}}\geq 0$ with a heading angle $\phi^{(t)}_{k}$ that varies as a random walk, i.e.,
\begin{align}
  \phi^{(t)}_{k} &=\phi^{(t-1)}_{k}+\nu^{(t)}_{k},
  \label{eq:ue_heading}\\[2pt]
  \mathbf{p}^{(t)}_{k} &=\mathbf{p}^{(t-1)}_{k}
    +v_{\mathrm{UE}}T_s\big[\cos\phi^{(t)}_{k}\;\;\sin\phi^{(t)}_{k}\big]^{\mathrm{T}},
  \label{eq:ue_mobility}
\end{align}
where $\nu^{(t)}_{k}\sim\mathcal{N}(0,\sigma_{\phi}^{2})$ and $\sigma_{\phi}$ controls the smoothness of the trajectory.

\noindent{\bf LoS Path.}
The distance and the LoS AoD of UE $k$ are determined by its position as
\begin{equation}
  r^{(t)}_{k}=\big\|\mathbf{p}^{(t)}_{k}\big\|_{2},\qquad
  \cos\theta^{\mathrm{LoS},(t)}_{k}={p^{(t)}_{k,1}}/{r^{(t)}_{k}},
  \label{eq:los_geometry}
\end{equation}
and the LoS gain is given by
\begin{equation}
  \alpha^{\mathrm{LoS},(t)}_{k}
  =\sqrt{\beta^{(t)}_{k}\rho^{\mathrm{LoS}}_{k}}\;
   e^{-\jmath\frac{2\pi}{\lambda}r^{(t)}_{k}},
  \label{eq:los_evolution}
\end{equation}
where $\beta^{(t)}_{k}$ denotes the large-scale path gain at distance $r^{(t)}_{k}$. The LoS phase thus follows the path length, and its per-slot rotation $2\pi|r^{(t)}_{k}-r^{(t-1)}_{k}|/\lambda$ is bounded by $2\pi f_{D}T_s$, where $f_{D}\triangleq v_{\mathrm{UE}}/\lambda$ denotes the maximum Doppler frequency. Moreover, the LoS AoD changes by at most $v_{\mathrm{UE}}T_s/r^{(t)}_{k}$ radians per slot.

\noindent{\bf NLoS Paths.}
Each NLoS path is associated with a static scattering cluster located far from the BS, so that its AoD $\theta^{\mathrm{NLoS}}_{k,\ell}$ remains constant. The motion of the UE relative to the unresolvable scatterers within each cluster induces small-scale fading, which is modeled as an AR(1) process whose one-slot correlation follows the Jakes' model:
\begin{equation}
  \alpha^{\mathrm{NLoS},(t)}_{k,\ell}
  =\rho_{\alpha}\,\alpha^{\mathrm{NLoS},(t-1)}_{k,\ell}
    +\sqrt{1-\rho_{\alpha}^{2}}\;\varpi^{(t)}_{k,\ell},
  \label{eq:gain_evolution}
\end{equation}
where $\varpi^{(t)}_{k,\ell}\sim\mathcal{CN}(0,\beta^{(t)}_{k}\rho^{\mathrm{NLoS}}_{k})$ and $\rho_{\alpha}=J_{0}(2\pi f_{D}T_s)$, with $J_{0}(\cdot)$ denoting the zeroth-order Bessel function of the first kind. Since $\beta^{(t)}_{k}$ varies slowly relative to the fading, the gain process is locally stationary with power $\beta^{(t)}_{k}\rho^{\mathrm{NLoS}}_{k}$.

As a result, the channel exhibits two timescales governed by the same UE velocity: the LoS AoD drifts at an angular rate of at most $v_{\mathrm{UE}}/r^{(t)}_{k}$, whereas the LoS phase and the NLoS gains vary on the timescale of $1/f_{D}=\lambda/v_{\mathrm{UE}}$. We refer to this as \emph{continuous channel variation}, which models a terminal that moves within its environment.

The propagation environment may in addition undergo \emph{event-driven variation}, such as user turnover or a change in the scheduled user set, at which the parameters of the affected UE are drawn anew rather than evolving from their previous values. The events are assumed to form a renewal process, whose rate is specified in Section~\ref{sec:simulation}. Both mechanisms are considered there. We denote by $\bm{\xi}^{(t)}$ the field-response parameters at slot $t$ and write
$\mathbf{H}_{t}(\mathbf{x})\triangleq\mathbf{H}(\mathbf{x};\bm{\xi}^{(t)})$ for brevity.

\subsection{CSI Acquisition}\label{subsec:csi}

Throughout the paper, the field-response parameters $\bm{\xi}^{(t)}$ are assumed to be available at the beginning of each slot. Since these parameters are attributes of the propagation environment rather than of the antenna positions, they allow the channel at any candidate APV to be reconstructed via \eqref{eq:channel} without further measurements. Their acquisition exploits the two timescales described in Section~\ref{subsec:temporal}. The slowly varying AoDs are estimated from measurements taken at multiple antenna positions, which provide the spatial samples needed to resolve the angles~\cite{xiao2024channel}; this sweep is therefore required only once every $N_{\mathrm{s}}\gg1$ slots. Given the AoDs, the per-slot path gains are estimated by least squares from the uplink pilots received at the pre-movement APV $\mathbf{x}_{t-1}$ under time-division duplexing reciprocity, without antenna movement.  Since the pilot overhead and the sweep slots, which follow a fixed schedule independent of the APV decisions, are common to all schemes, they are not considered in the optimization. The joint design of the sweep schedule and the APV optimization is left for future work.

\section{Problem Formulation}
\label{sec:formulation}

We first characterize the set of APVs reachable within a single slot, then define the per-slot utility and movement energy, and finally state a long-term throughput maximization problem subject to a movement energy budget.

\subsection{Per-Slot Feasible Set}
\label{subsec:feasible}

Given the previous APV $\mathbf{x}_{t-1}$, the APV of slot $t$ must lie in the feasible set
\begin{align}
  &\mathcal{A}(\mathbf{x}_{t-1})\triangleq\nonumber\\
  &\left\{\mathbf{x}\in\mathbb{R}^{M}\;\middle|\;
  \begin{aligned}
    &0\le x_{1},\quad x_{M}\le D,\\
    &x_{m+1}-x_{m}\ge d,\ \forall m\in\{1,\dots,M-1\},\\
    &\big\|\mathbf{x}-\mathbf{x}_{t-1}\big\|_{\infty}\le vT_s
  \end{aligned}\right\},
  \label{eq:feasible_set}
\end{align}
where $d>0$ denotes the minimum inter-element spacing imposed to mitigate mutual coupling, and the initial APV $\mathbf{x}_{0}$ is assumed to satisfy the aperture and spacing constraints.  The spacing constraint also enforces the ascending order $x_{1}<x_{2}<\cdots<x_{M}$, so that the antenna indexing remains consistent across slots. The last condition in \eqref{eq:feasible_set} is a \emph{reachability constraint}, which is equivalent to $\tau^{\mathrm{mov}}_{t}\le T_s$, i.e., the repositioning must be completed within the slot so that $T^{\mathrm{data}}_{t}\ge0$.

\begin{remark}[Properties of $\mathcal{A}(\mathbf{x}_{t-1})$]
\label{rem:compact}
The set $\mathcal{A}(\mathbf{x}_{t-1})$ is the intersection of a box, $M-1$ linear half-spaces, and an $\ell_{\infty}$ ball centered at $\mathbf{x}_{t-1}$, and is therefore compact and convex. Moreover, it is nonempty for every $t\ge1$: since $\mathbf{x}_{0}$ satisfies the first two conditions in \eqref{eq:feasible_set}, so does every $\mathbf{x}_{t-1}$ by induction, and hence the \emph{stay-put} action $\mathbf{x}_{t}=\mathbf{x}_{t-1}$ is always feasible.
\end{remark}

\subsection{Long-Term Throughput Maximization}
\label{subsec:formulation}

Let $\mathbf{W}(\mathbf{x};\bm{\xi}^{(t)})=[\,\mathbf{w}_{1}\;\cdots\;\mathbf{w}_{K}\,]$ denote a zero-forcing (ZF) precoder\footnote{The algorithm of Section~\ref{sec:algorithm} applies to any linear precoder for which $\tilde{\gamma}^{\mathrm{sum}}$ is twice differentiable in $\mathbf{x}$, which includes maximum-ratio transmission~(MRT) as well as ZF.}  with unit-norm columns, given by the normalized columns of $\mathbf{H}(\mathbf{H}^{\mathrm{H}}\mathbf{H})^{-1}$. With equal power allocation under the total power $P$, the instantaneous sum-rate is
\begin{equation}
  \tilde{\gamma}^{\mathrm{sum}}(\mathbf{x};\bm{\xi}^{(t)})
  =\sum_{k=1}^{K}\log_{2}\!\left(1+
  \frac{\frac{P}{K}\,|\mathbf{h}_{k}^{\mathrm{H}}\mathbf{w}_{k}|^{2}}
  {\frac{P}{K}\sum_{j\ne k}|\mathbf{h}_{k}^{\mathrm{H}}\mathbf{w}_{j}|^{2}+\sigma_{n}^{2}}
  \right),
  \label{eq:sumrate}
\end{equation}
where the dependence of $\mathbf{h}_{k}$ and $\mathbf{w}_{k}$ on $(\mathbf{x};\bm{\xi}^{(t)})$ is omitted for brevity and $\sigma_{n}^{2}$ denotes the noise power. It is assumed that the channel gains are bounded, so that $\Gamma^{\max}\triangleq\sup_{\bm{\xi},\mathbf{x}}\tilde{\gamma}^{\mathrm{sum}}(\mathbf{x};\bm{\xi})<\infty$. The per-slot utility, defined as the effective throughput of slot $t$, is given by
\begin{equation}
  U_{t}(\mathbf{x}_{t},\mathbf{x}_{t-1}) \triangleq \left(T_s-\frac{\|\mathbf{x}_{t}-\mathbf{x}_{t-1}\|_{\infty}}{v}\right) \tilde{\gamma}^{\mathrm{sum}}(\mathbf{x}_{t};\bm{\xi}^{(t)}).
  \label{eq:utility}
\end{equation}
In contrast to the movement delay in \eqref{eq:mov_delay}, which is governed by the \emph{maximum} displacement, the mechanical energy for repositioning is consumed by \emph{every} actuator independently. Accordingly, the movement energy of slot $t$ is modeled as
\begin{equation}
  e_{t}(\mathbf{x}_{t},\mathbf{x}_{t-1})
  \triangleq
  c_{\mathrm{e}}\big\|\mathbf{x}_{t}-\mathbf{x}_{t-1}\big\|_{1},
  \label{eq:energy}
\end{equation}
where $c_{\mathrm{e}}>0$ denotes the energy consumption per unit displacement.

We aim to maximize the time-averaged effective throughput subject to a time-averaged movement energy budget:
\begin{subequations}\label{eq:P1}
\begin{align}
  \mathcal{P}_{1}:
  \max_{\{\mathbf{x}_{t}\}_{t\ge1}}\quad
  & \bar{U}\;\triangleq\;
    \liminf_{T\to\infty}\frac{1}{T}\sum_{t=1}^{T}
    \mathbb{E}\big[U_{t}(\mathbf{x}_{t},\mathbf{x}_{t-1})\big]
    \label{eq:P1_obj}\\
  \mathrm{s.t.}\quad
  & \bar{e}\;\triangleq\;
    \limsup_{T\to\infty}\frac{1}{T}\sum_{t=1}^{T}
    \mathbb{E}\big[e_{t}(\mathbf{x}_{t},\mathbf{x}_{t-1})\big]
    \;\le\; E_{\mathrm{avg}},
    \label{eq:P1_energy}\\
  & \mathbf{x}_{t}\in\mathcal{A}(\mathbf{x}_{t-1}),
    \quad\forall t\ge1,
    \label{eq:P1_feasible}
\end{align}
\end{subequations}
where $E_{\mathrm{avg}}>0$ denotes the average movement energy budget, the optimization is over causal policies in which $\mathbf{x}_{t}$ depends only on the information available up to slot $t$, and the expectation is taken over the channel process and any randomness in the policy.

The difference between \eqref{eq:mov_delay} and \eqref{eq:energy} shapes the solution of $\mathcal{P}_{1}$. Once the maximum displacement is set by one antenna, displacing the others within the same range incurs no additional delay but consumes additional energy. Hence, whereas a delay-only design tends to displace most antennas, the
energy constraint \eqref{eq:P1_energy} creates an incentive for \emph{sparse} repositioning, which the algorithm developed in Section~\ref{sec:algorithm} exploits explicitly, as made precise in Remark~\ref{rem:sparse}.

Problem $\mathcal{P}_{1}$ is challenging for three reasons. First, the decisions are coupled across slots, both through the previous APV $\mathbf{x}_{t-1}$ in \eqref{eq:feasible_set}, \eqref{eq:utility}, and \eqref{eq:energy}, and through the time-averaged constraint \eqref{eq:P1_energy}. Second, the per-slot utility is non-concave in $\mathbf{x}_{t}$ and involves the non-smooth $\ell_{\infty}$ and $\ell_{1}$ norms. Third, an optimal causal policy would require the statistics of the channel process and the solution of a constrained Markov decision problem over a continuous state space, which is intractable in practice. To address these challenges, we adopt the Lyapunov optimization framework~\cite{neely2010}, which handles the time-averaged constraint through a virtual queue and requires only the current channel state and the previous APV, without any statistical knowledge of the channel process.

\section{Lyapunov-Based Online Framework}
\label{sec:framework}

This section transforms $\mathcal{P}_{1}$ into a sequence of per-slot problems based on the Lyapunov drift-plus-penalty method~\cite{neely2010}, in which a virtual queue prices the movement energy against the instantaneous throughput.

To handle the time-averaged constraint \eqref{eq:P1_energy}, we introduce a \emph{virtual queue} $Q_{t}$ with $Q_{1}=0$, which evolves as
\begin{equation}
  Q_{t+1}=\Big[\,Q_{t}+e_{t}(\mathbf{x}_{t},\mathbf{x}_{t-1})
  -E_{\mathrm{avg}}\,\Big]^{+}.
  \label{eq:virtual_queue}
\end{equation}
The backlog $Q_{t}$ measures the accumulated excess of the movement energy over the budget, and \eqref{eq:P1_energy} is satisfied if $Q_{t}$ grows sublinearly in $t$. The reachability constraint further yields the following bounds on the per-slot quantities.

\begin{lemma}[Boundedness]\label{lem:bounded}
For all $t\ge1$ and all $\mathbf{x}_{t}\in\mathcal{A}(\mathbf{x}_{t-1})$,
\begin{align}
  0&\le U_{t}\le U^{\max}\triangleq T_{s}\Gamma^{\max},\nonumber\\
  0&\le e_{t}\le e_{\max}\triangleq c_{\mathrm{e}}MvT_{s}.
  \label{eq:bounded}
\end{align}
\end{lemma}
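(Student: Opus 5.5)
The plan is to derive both bounds directly from the reachability constraint $\|\mathbf{x}_{t}-\mathbf{x}_{t-1}\|_{\infty}\le vT_s$ in \eqref{eq:feasible_set}, together with the norm inequality $\|\mathbf{z}\|_{1}\le M\|\mathbf{z}\|_{\infty}$ for $\mathbf{z}\in\mathbb{R}^{M}$. By Remark~\ref{rem:compact}, $\mathcal{A}(\mathbf{x}_{t-1})$ is nonempty for every $t\ge1$, so the statement is never vacuous. We fix $t\ge1$ and an arbitrary $\mathbf{x}_{t}\in\mathcal{A}(\mathbf{x}_{t-1})$, and write $\mathbf{z}\triangleq\mathbf{x}_{t}-\mathbf{x}_{t-1}$.

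For the utility, the first step is the time factor in \eqref{eq:utility}. Since $0\le\|\mathbf{z}\|_{\infty}\le vT_s$, we get
\begin{equation*}
0\le T_s-\|\mathbf{z}\|_{\infty}/v\le T_s .
\end{equation*}
Next, each summand of \eqref{eq:sumrate} is $\log_2(1+\mathrm{SINR})$, and the SINR is nonnegative because its numerator and denominator are nonnegative and $\sigma_n^2>0$. Hence $0\le\tilde{\gamma}^{\mathrm{sum}}(\mathbf{x}_{t};\bm{\xi}^{(t)})\le\Gamma^{\max}$, where the upper bound is the standing assumption following \eqref{eq:sumrate}, since $\Gamma^{\max}$ is the supremum over all $\bm{\xi}$ and $\mathbf{x}$. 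The product of two nonnegative quantities bounded by $T_s$ and $\Gamma^{\max}$ lies in $[0,T_s\Gamma^{\max}]$, which gives the first line of \eqref{eq:bounded}.

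For the energy, \eqref{eq:energy} gives $e_t=c_{\mathrm{e}}\|\mathbf{z}\|_1\ge0$ because $c_{\mathrm{e}}>0$. For the upper bound, we use $\|\mathbf{z}\|_1=\sum_{m=1}^{M}|z_m|\le M\|\mathbf{z}\|_\infty\le MvT_s$, which yields $e_t\le c_{\mathrm{e}}MvT_s=e_{\max}$.

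There is no substantive obstacle. The only point requiring care is that the sum-rate is well defined, i.e., that the ZF precoder exists because $\mathbf{H}^{\mathrm{H}}\mathbf{H}$ is invertible. This is implicitly part of the boundedness assumption on $\Gamma^{\max}$, so we take it as given. If one prefers not to rely on ZF specifically, the nonnegativity argument applies verbatim to any linear precoder, as noted in the footnote.
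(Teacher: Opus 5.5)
Your proposal is correct and follows the paper's proof essentially step for step: the reachability constraint gives $T^{\mathrm{data}}_{t}\in[0,T_s]$, which together with $0\le\tilde{\gamma}^{\mathrm{sum}}\le\Gamma^{\max}$ bounds $U_t$, and $\|\mathbf{u}\|_{1}\le M\|\mathbf{u}\|_{\infty}\le MvT_s$ bounds $e_t$. The one addition is your explicit argument that the sum-rate is nonnegative because each SINR is nonnegative, which the paper takes for granted.
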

\begin{IEEEproof}
The reachability constraint in \eqref{eq:feasible_set} gives $\|\mathbf{x}_{t}-\mathbf{x}_{t-1}\|_{\infty}\le vT_{s}$, so that the data transmission time satisfies
$T^{\mathrm{data}}_{t}=T_{s}-\|\mathbf{x}_{t} -\mathbf{x}_{t-1}\|_{\infty}/v\in[0,T_{s}]$. Since
$0\le\tilde{\gamma}^{\mathrm{sum}}(\mathbf{x};\bm{\xi}) \le\Gamma^{\max}$, the product in \eqref{eq:utility} satisfies $0\le U_{t}\le T_{s}\Gamma^{\max}$. For the energy, the norm
inequality $\|\mathbf{u}\|_{1}\le M\|\mathbf{u}\|_{\infty}$ together with the reachability constraint yields $e_{t}=c_{\mathrm{e}}\|\mathbf{x}_{t}-\mathbf{x}_{t-1}\|_{1} \le c_{\mathrm{e}}M\|\mathbf{x}_{t}-\mathbf{x}_{t-1}\|_{\infty} \le c_{\mathrm{e}}MvT_{s}$.
\end{IEEEproof}

With the Lyapunov function $\mathcal{L}(Q_{t})\triangleq\frac{1}{2}Q_{t}^{2}$ and the one-slot conditional drift $\Delta(Q_{t})\triangleq \mathbb{E}\big[\mathcal{L}(Q_{t+1})-\mathcal{L}(Q_{t})\,\big|\,Q_{t}\big]$, the drift-plus-penalty (DPP) term admits the following bound, in which $V>0$ is a control parameter balancing throughput optimality against queue backlog.

\begin{lemma}[Drift-plus-penalty bound]\label{lem:drift}
For any $V>0$ and any feasible sequence $\{\mathbf{x}_{t}\}$,
\begin{equation}
  \Delta(Q_{t})-V\,\mathbb{E}\big[U_{t}\,\big|\,Q_{t}\big]
  \;\le\;
  C_{0} +\mathbb{E}\Big[Q_{t}\big(e_{t}-E_{\mathrm{avg}}\big)-V\,U_{t}\;\Big|\;Q_{t}\Big],
  \label{eq:dpp}
\end{equation}
where $C_{0}\triangleq\frac{1}{2}\big(e_{\max}^{2}+E_{\mathrm{avg}}^{2}\big)$.
\end{lemma}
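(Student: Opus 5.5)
We follow the standard drift argument of the Lyapunov framework and bound the one-step change of $\mathcal{L}(Q_t)$ pathwise before taking conditional expectations. Write $a_t\triangleq e_t(\mathbf{x}_t,\mathbf{x}_{t-1})-E_{\mathrm{avg}}$, so that the queue update \eqref{eq:virtual_queue} reads $Q_{t+1}=[Q_t+a_t]^{+}$. The first step is the elementary inequality $([q]^{+})^{2}\le q^{2}$ for any real $q$. Applying it with $q=Q_t+a_t$ gives
\begin{equation*}
Q_{t+1}^{2}\le (Q_t+a_t)^{2}=Q_t^{2}+2Q_t a_t+a_t^{2},
\end{equation*}
and hence
\begin{equation*}
\mathcal{L}(Q_{t+1})-\mathcal{L}(Q_t)\le Q_t a_t+\tfrac{1}{2}a_t^{2}.
\end{equation*}

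The second step bounds $a_t^{2}$ uniformly using Lemma~\ref{lem:bounded}. Since $0\le e_t\le e_{\max}$ and $E_{\mathrm{avg}}>0$, the quantity $a_t$ lies in $[-E_{\mathrm{avg}},\,e_{\max}-E_{\mathrm{avg}}]$. We can then use either $a_t^{2}\le \max\{E_{\mathrm{avg}}^{2},(e_{\max}-E_{\mathrm{avg}})^{2}\}$ or simply
\begin{equation*}
(e_t-E_{\mathrm{avg}})^{2}=e_t^{2}+E_{\mathrm{avg}}^{2}-2e_tE_{\mathrm{avg}}\le e_{\max}^{2}+E_{\mathrm{avg}}^{2},
\end{equation*}
where the cross term is dropped because $e_t\ge 0$ and $E_{\mathrm{avg}}>0$. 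This yields $\tfrac{1}{2}a_t^{2}\le C_0$ for every feasible choice of $\mathbf{x}_t\in\mathcal{A}(\mathbf{x}_{t-1})$, and the bound holds for every channel realization.

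The third step subtracts $V U_t$ from both sides of the pathwise bound and takes the conditional expectation given $Q_t$. Linearity of conditional expectation then gives \eqref{eq:dpp}. The factor $Q_t$ is kept inside the expectation as written, which is consistent because $Q_t$ is measurable with respect to the conditioning. No step is a real obstacle here. The only point to check is that the bound $a_t^{2}\le 2C_0$ relies on nonnegativity of $e_t$ and feasibility of $\mathbf{x}_t$, through the reachability constraint used in Lemma~\ref{lem:bounded}. That is why the statement is restricted to feasible sequences.
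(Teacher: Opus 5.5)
Your proposal is correct and follows essentially the same route as the paper's proof. Like the paper, you apply $([a]^{+})^{2}\le a^{2}$ to the queue update, bound $(e_t-E_{\mathrm{avg}})^{2}\le e_{\max}^{2}+E_{\mathrm{avg}}^{2}=2C_0$ by dropping the nonpositive cross term and invoking Lemma~\ref{lem:bounded}, and then condition on $Q_t$ and subtract $V\,\mathbb{E}[U_t\,|\,Q_t]$.
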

\begin{IEEEproof}
Using $([\,a\,]^{+})^{2}\le a^{2}$ for any $a\in\mathbb{R}$, the queue update \eqref{eq:virtual_queue} gives
\begin{align}
  \tfrac{1}{2}Q_{t+1}^{2}-\tfrac{1}{2}Q_{t}^{2}
  &\le\tfrac{1}{2}\big(e_{t}-E_{\mathrm{avg}}\big)^{2}
    +Q_{t}\big(e_{t}-E_{\mathrm{avg}}\big)\notag\\
  &\le C_{0}+Q_{t}\big(e_{t}-E_{\mathrm{avg}}\big),
  \label{eq:drift_raw}
\end{align}
where the last inequality follows from $(e_{t}-E_{\mathrm{avg}})^{2}\le e_{t}^{2}+E_{\mathrm{avg}}^{2}\le e_{\max}^{2}+E_{\mathrm{avg}}^{2}=2C_{0}$, using $e_{t},E_{\mathrm{avg}}\ge0$ and Lemma~\ref{lem:bounded}. Taking the conditional expectation given $Q_{t}$ and subtracting $V\,\mathbb{E}[U_{t}\,|\,Q_{t}]$ from both sides yields \eqref{eq:dpp}.
\end{IEEEproof}

Following the DPP principle, the APV at each slot is chosen to minimize the right-hand side of \eqref{eq:dpp}. Discarding the terms independent of $\mathbf{x}_{t}$ and negating the objective yields the per-slot problem
\begin{equation}
  \mathcal{P}_{2}:\;\; \max_{\mathbf{x}\in\mathcal{A}(\mathbf{x}_{t-1})}\; F(\mathbf{x})\triangleq h(\mathbf{x})-\psi(\mathbf{x}),
  \label{eq:P2}
\end{equation}
where
\begin{align}
  h(\mathbf{x})
  &\triangleq V\Big(T_s-\tfrac{\|\mathbf{x}-\mathbf{x}_{t-1}\|_{\infty}}{v}\Big)
   \tilde{\gamma}^{\mathrm{sum}}(\mathbf{x};\bm{\xi}^{(t)}),
  \label{eq:h_def}\\
  \psi(\mathbf{x})
  &\triangleq Q_{t}c_{\mathrm{e}}\|\mathbf{x}-\mathbf{x}_{t-1}\|_{1}
  \label{eq:psi_def}
\end{align} denote the weighted effective throughput and the energy penalty, respectively. 

Problem $\mathcal{P}_{2}$ depends only on the causally available quantities $\mathbf{x}_{t-1}$, $\bm{\xi}^{(t)}$, and $Q_{t}$; no statistical knowledge of the channel process is required. In $\mathcal{P}_{2}$, the virtual queue $Q_{t}$ acts as an adaptive weight on the movement energy. When the accumulated consumption exceeds the budget, $Q_{t}$ grows and $\mathcal{P}_{2}$ favors conservative repositioning. When the budget is underutilized, $Q_{t}$ vanishes and $\mathcal{P}_{2}$ reduces to a purely delay-aware, per-slot maximization of the effective throughput. The aggressiveness of repositioning is thus governed by the entire consumption history rather than by the current slot alone.

\begin{remark}[On the optimality gap]\label{rem:optgap}
The drift-plus-penalty method guarantees an $\mathcal{O}(1/V)$ optimality gap when the action set in each slot is independent of past decisions, both for i.i.d.\ channels~\cite{neely2010} and for arbitrary sample paths~\cite{neely2010universal}. In $\mathcal{P}_{1}$, however, the feasible set $\mathcal{A}(\mathbf{x}_{t-1})$ depends on the previous decision, so that a benchmark policy occupying a different trajectory need not be reachable from the current APV, and the per-slot comparison underlying these analyses is unavailable. Instead, Theorem~\ref{thm:energy} guarantees the energy budget for every channel realization, Proposition~\ref{prop:reallocation} quantifies the benefit of reallocating the energy budget over time, and the throughput is evaluated numerically in Section~\ref{sec:simulation}.
\end{remark}

\section{LOMA Algorithm and Performance Guarantees}
\label{sec:algorithm}

This section develops the Lyapunov-based online movable-antenna optimization ({\bf LOMA}) algorithm for $\mathcal{P}_{2}$ and analyzes its convergence, computational complexity, and long-term energy expenditure. For brevity, the slot index $t$ is omitted where unambiguous, and we define $\mathbf{x}^{\mathrm{prev}}\triangleq\mathbf{x}_{t-1}$ and $\mathcal{A}\triangleq\mathcal{A}(\mathbf{x}_{t-1})$. Maximizing $F(\mathbf{x})$ in \eqref{eq:P2} involves three difficulties: (i) the $\ell_{\infty}$ norm in $h(\mathbf{x})$ is non-smooth; (ii) $h(\mathbf{x})$ is non-concave owing to the position-dependent sum-rate; and (iii) $\psi(\mathbf{x})$ is non-smooth. These are addressed by log-sum-exp (LSE) smoothing (Section~\ref{subsec:lse}), a minorization--maximization (MM) surrogate (Section~\ref{subsec:mm}), and an exact proximal update (Section~\ref{subsec:prox}), respectively. The concave term $-\psi(\mathbf{x})$ is retained without smoothing, which preserves the soft-thresholding structure of the proximal update and thereby promotes sparse repositioning (Remark~\ref{rem:sparse}).

\subsection{LSE Smoothing of the Movement Delay Term}
\label{subsec:lse}

Let $\mathbf{u}\triangleq\mathbf{x}-\mathbf{x}^{\mathrm{prev}}$. Following the smoothing technique in~\cite{nesterov2005}, the $\ell_{\infty}$ norm in \eqref{eq:h_def} is approximated by smoothing each absolute value and then applying the LSE function, i.e.,
\begin{equation}
  \Phi_{\mu,\epsilon}(\mathbf{u})
  \triangleq
  \mu\ln\!\left(\sum_{m=1}^{M}
  \exp\!\left(\frac{\sqrt{u_{m}^{2}+\epsilon^{2}}}{\mu}\right)\right),
  \label{eq:lse}
\end{equation}
where $\mu>0$ and $\epsilon>0$ denote the LSE and absolute-value smoothing parameters, respectively. The resulting smoothed counterpart of $h(\mathbf{x})$ is
\begin{equation}
  \tilde{h}(\mathbf{x})
  \triangleq
  V\Big(T_s-\tfrac{\Phi_{\mu,\epsilon}(\mathbf{x}-\mathbf{x}^{\mathrm{prev}})}{v}\Big)
  \tilde{\gamma}^{\mathrm{sum}}(\mathbf{x};\bm{\xi}^{(t)}),
  \label{eq:h_tilde}
\end{equation}
and the smoothed objective is defined as $\tilde{F}(\mathbf{x})\triangleq\tilde{h}(\mathbf{x})-\psi(\mathbf{x})$. The following proposition shows that $\tilde{F}$ is a lower bound on $F$ with a controllable gap.

\begin{proposition}[Conservative smoothing]\label{prop:lse_bound}
For all $\mathbf{x}\in\mathbb{R}^{M}$, the smoothing gap satisfies
\begin{equation}
  0\;\le\;F(\mathbf{x})-\tilde{F}(\mathbf{x})
  \;\le\;\frac{V\,\Gamma^{\max}}{v}\big(\epsilon+\mu\ln M\big).
  \label{eq:lse_gap}
\end{equation}
\end{proposition}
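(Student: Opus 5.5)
Since $\psi$ appears identically in $F$ and $\tilde{F}$, it cancels, and the gap reduces to
\[
F(\mathbf{x})-\tilde{F}(\mathbf{x})=\frac{V}{v}\,\tilde{\gamma}^{\mathrm{sum}}(\mathbf{x};\bm{\xi}^{(t)})\big(\Phi_{\mu,\epsilon}(\mathbf{u})-\|\mathbf{u}\|_{\infty}\big),
\qquad \mathbf{u}=\mathbf{x}-\mathbf{x}^{\mathrm{prev}}.
\]
The plan is therefore to establish the two-sided bound
\[
0\le \Phi_{\mu,\epsilon}(\mathbf{u})-\|\mathbf{u}\|_{\infty}\le \epsilon+\mu\ln M
\]
for all $\mathbf{u}\in\mathbb{R}^{M}$. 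It then remains only to multiply by the factor $V\tilde{\gamma}^{\mathrm{sum}}/v$, which satisfies $0\le V\tilde{\gamma}^{\mathrm{sum}}/v\le V\Gamma^{\max}/v$ because $\tilde{\gamma}^{\mathrm{sum}}\in[0,\Gamma^{\max}]$ (as used in Lemma~\ref{lem:bounded}). Since this factor is nonnegative, the sign of the bracket is preserved. Note that no reachability constraint is needed, so the statement indeed holds for all $\mathbf{x}\in\mathbb{R}^M$.

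To prove the scalar-level bound, set $s_m\triangleq\sqrt{u_m^2+\epsilon^2}$, and split the argument into two layers.

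\emph{Absolute-value layer.} Since $|u_m|\le s_m\le |u_m|+\epsilon$ for every $m$ (the right inequality by squaring), it follows that
\[
\|\mathbf{u}\|_{\infty}\le \max_m s_m\le \|\mathbf{u}\|_{\infty}+\epsilon.
\]

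\emph{LSE layer.} Apply the standard sandwich
\[
\max_m s_m\le \mu\ln\sum_{m}e^{s_m/\mu}\le \max_m s_m+\mu\ln M,
\]
obtained by bounding the sum below by its largest term and above by $M$ times that term. Chaining the two layers gives $\|\mathbf{u}\|_\infty\le\Phi_{\mu,\epsilon}(\mathbf{u})\le\|\mathbf{u}\|_\infty+\epsilon+\mu\ln M$, which is the required bound.

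There is no real obstacle here. The only point requiring care is the sign bookkeeping. The smoothed delay $\Phi_{\mu,\epsilon}$ overestimates the true delay, so $\tilde{h}\le h$. This holds even when the data time $T_s-\Phi_{\mu,\epsilon}/v$ becomes negative, because the difference $h-\tilde{h}$ does not depend on the sign of the data-time factor and always has the form displayed above.
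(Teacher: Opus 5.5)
Your proof is correct and follows essentially the same route as the paper. You cancel $\psi$, reduce the gap to $\frac{V}{v}\tilde{\gamma}^{\mathrm{sum}}\big(\Phi_{\mu,\epsilon}(\mathbf{u})-\|\mathbf{u}\|_{\infty}\big)$, sandwich $\Phi_{\mu,\epsilon}$ using $|u_m|\le\sqrt{u_m^2+\epsilon^2}\le|u_m|+\epsilon$ together with the standard LSE bounds, and multiply by $0\le\tilde{\gamma}^{\mathrm{sum}}\le\Gamma^{\max}$. Your remark that this identity does not depend on the sign of the data-time factor is a correct point that the paper leaves implicit.
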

\begin{IEEEproof}
For any $\mathbf{a}\in\mathbb{R}^{M}$, the LSE function satisfies~\cite{nesterov2005} $\max_{m}a_{m}\le\mu\ln\sum_{m}e^{a_{m}/\mu}\le\max_{m}a_{m}+\mu\ln M$. Setting $a_{m}=\sqrt{u_{m}^{2}+\epsilon^{2}}$ and using $|u_{m}|\le\sqrt{u_{m}^{2}+\epsilon^{2}}\le|u_{m}|+\epsilon$ together with the monotonicity of the maximum yields
\begin{equation}
  \|\mathbf{u}\|_{\infty}
  \;\le\;\Phi_{\mu,\epsilon}(\mathbf{u})
  \;\le\;\|\mathbf{u}\|_{\infty}+\epsilon+\mu\ln M .
  \label{eq:lse_bound}
\end{equation}
Since $\psi(\mathbf{x})$ is common to $F$ and $\tilde{F}$, it follows from \eqref{eq:h_def} and \eqref{eq:h_tilde} that
\begin{equation}\label{eq:diff}
  F(\mathbf{x})-\tilde{F}(\mathbf{x})
  =\frac{V}{v}\,\tilde{\gamma}^{\mathrm{sum}}(\mathbf{x};\bm{\xi}^{(t)})
   \Big(\Phi_{\mu,\epsilon}(\mathbf{u})-\|\mathbf{u}\|_{\infty}\Big).
\end{equation}
Combining \eqref{eq:lse_bound} and \eqref{eq:diff} with $0\le\tilde{\gamma}^{\mathrm{sum}}(\mathbf{x};\bm{\xi}^{(t)})\le\Gamma^{\max}$ establishes \eqref{eq:lse_gap}.
\end{IEEEproof}

Proposition~\ref{prop:lse_bound} shows that the smoothing gap can be made arbitrarily small by reducing $\epsilon$ and $\mu$, at the cost of a sharper curvature of $\Phi_{\mu,\epsilon}$ and hence a larger Lipschitz constant in Section~\ref{subsec:mm}. In this work, the two terms in \eqref{eq:lse_gap}, namely $\epsilon$ and $\mu\ln M$, are balanced by setting $\mu=\epsilon/\ln M$, and $\epsilon$ is chosen as a small fraction of the per-slot reachable range $vT_s$.

\subsection{Minorization--Maximization Surrogate}
\label{subsec:mm}

Since $\tilde{h}(\mathbf{x})$ remains non-concave, we construct a surrogate function that minorizes $\tilde{F}(\mathbf{x})$. Let $\mathbf{x}^{(i-1)}$ denote the iterate at the $(i\!-\!1)$-th inner iteration, with $\mathbf{x}^{(0)}=\mathbf{x}^{\mathrm{prev}}$.

Throughout, $\tilde{\gamma}^{\mathrm{sum}}(\cdot;\bm{\xi}^{(t)})$ is assumed to be twice continuously differentiable on a neighborhood of $\mathcal{A}$. This holds for MRT precoding whenever $\mathbf{h}_{k}(\mathbf{x};\bm{\xi}^{(t)})\neq \mathbf{0}$ for all $k$, and for ZF precoding whenever $\mathbf{H}_{t}(\mathbf{x})$ has full column rank on $\mathcal{A}$. Since $\Phi_{\mu,\epsilon}$ is infinitely differentiable for any $\mu,\epsilon>0$, $\tilde{h}$ is then twice continuously differentiable, and the compactness of $\mathcal{A}$ implies that $\nabla\tilde{h}$ is Lipschitz continuous on $\mathcal{A}$ with some finite constant $L_{\mathrm{loc}}$. Moreover, since $\mathcal{A}$ is convex, the descent lemma~\cite[Prop.~A.24]{bertsekas1999} yields, for any $L\ge L_{\mathrm{loc}}$,
\begin{align}
  \tilde{h}(\mathbf{x})\;\ge\;
  &\tilde{h}(\mathbf{x}^{(i-1)})
  +\nabla\tilde{h}(\mathbf{x}^{(i-1)})^{\mathrm{T}}
   \big(\mathbf{x}-\mathbf{x}^{(i-1)}\big)\nonumber\\
  &-\tfrac{L}{2}\big\|\mathbf{x}-\mathbf{x}^{(i-1)}\big\|_{2}^{2},
  \quad\forall\mathbf{x}\in\mathcal{A},
  \label{eq:minorant_h}
\end{align}
with equality at $\mathbf{x}=\mathbf{x}^{(i-1)}$. Accordingly, the surrogate function is defined as
\begin{align}
  g\big(\mathbf{x}\,\big|\,\mathbf{x}^{(i-1)};L\big)
  &\triangleq
  \tilde{h}(\mathbf{x}^{(i-1)})
  +\nabla\tilde{h}(\mathbf{x}^{(i-1)})^{\mathrm{T}}
   \big(\mathbf{x}-\mathbf{x}^{(i-1)}\big)\nonumber\\
  &\quad-\tfrac{L}{2}\big\|\mathbf{x}-\mathbf{x}^{(i-1)}\big\|_{2}^{2}
  -\psi(\mathbf{x}),
  \label{eq:surrogate}
\end{align}
where the concave term $-\psi(\mathbf{x})$ is retained without approximation. By \eqref{eq:minorant_h}, \eqref{eq:surrogate} satisfies the minorization conditions
\begin{align}
  g\big(\mathbf{x}\,\big|\,\mathbf{x}^{(i-1)};L\big)
  &\;\le\;\tilde{F}(\mathbf{x}),
  \qquad\forall\mathbf{x}\in\mathcal{A},\nonumber\\
  g\big(\mathbf{x}^{(i-1)}\,\big|\,\mathbf{x}^{(i-1)};L\big)
  &\;=\;\tilde{F}(\mathbf{x}^{(i-1)}).
  \label{eq:mm_conditions}
\end{align}
Furthermore, since $\psi$ is convex, \eqref{eq:surrogate} is strongly concave in $\mathbf{x}$ with modulus $L$, and its maximizer over $\mathcal{A}$ is unique.

\subsection{Surrogate Maximization via Proximal Update}
\label{subsec:prox}

Discarding the terms independent of $\mathbf{x}$ in \eqref{eq:surrogate} and completing the square, maximizing the surrogate over $\mathcal{A}$ is equivalent to the constrained proximal problem
\begin{equation}
  \min_{\mathbf{x}\in\mathcal{A}}\;
  \tfrac{L}{2}\big\|\mathbf{x}-\mathbf{z}^{(i)}\big\|_{2}^{2}
  +\psi(\mathbf{x}),
  \label{eq:prox_problem}
\end{equation}
where $\mathbf{z}^{(i)}\triangleq\mathbf{x}^{(i-1)}+\tfrac{1}{L}\nabla\tilde{h}(\mathbf{x}^{(i-1)})$ denotes the tentative gradient step.

\noindent{\bf Exact Proximal Update.}
Introducing an auxiliary vector $\mathbf{s}\in\mathbb{R}^{M}$ that upper-bounds the entrywise displacement, problem \eqref{eq:prox_problem} is equivalently written as the quadratic program (QP)
\begin{subequations}\label{eq:prox_qp}
\begin{align}
  \min_{\mathbf{x},\,\mathbf{s}}\;\;
  &\tfrac{L}{2}\big\|\mathbf{x}-\mathbf{z}^{(i)}\big\|_{2}^{2}
   +Q_{t}c_{\mathrm{e}}\mathbf{1}^{\mathrm{T}}\mathbf{s}
   \label{eq:prox_qp_obj}\\
  \mathrm{s.t.}\;\;
  &x_{m}^{\min}\le x_{m}\le x_{m}^{\max},\quad\forall m,
    \label{eq:prox_qp_box}\\
  &x_{m+1}-x_{m}\ge d,\quad\forall m\in\{1,\dots,M-1\},
    \label{eq:prox_qp_space}\\
  &-\mathbf{s}\le\mathbf{x}-\mathbf{x}^{\mathrm{prev}}\le\mathbf{s},
    \label{eq:prox_qp_abs}
\end{align}
\end{subequations}
where the per-element bounds induced by~\eqref{eq:feasible_set} are
\begin{equation}
  x_{m}^{\min}=\max\big(0,\,x^{\mathrm{prev}}_{m}-vT_s\big),\; x_{m}^{\max}=\min\big(D,\,x^{\mathrm{prev}}_{m}+vT_s\big).
  \label{eq:proj_bounds}
\end{equation}
Problem \eqref{eq:prox_qp} has $2M$ variables and $5M-1$ linear constraints, and is feasible since $\mathcal{A}$ is nonempty (Remark~\ref{rem:compact}). Since the objective of \eqref{eq:prox_problem} is strongly convex, the $\mathbf{x}$-component of the solution of \eqref{eq:prox_qp}, denoted by $\mathbf{x}^{\mathrm{tmp}}$, is unique and coincides with the solution of \eqref{eq:prox_problem}.

\noindent{\bf Backtracking Line Search.}
Although \eqref{eq:minorant_h} holds for every $L\ge L_{\mathrm{loc}}$, the constant $L_{\mathrm{loc}}$ is difficult to compute in closed form owing to the composite structure of \eqref{eq:h_tilde}. We therefore employ a backtracking line search that inflates $L\leftarrow\eta L$ with $\eta>1$ until the \emph{acceptance condition}
\begin{equation}
  \tilde{F}\big(\mathbf{x}^{\mathrm{tmp}}\big)
  \;\ge\;
  g\big(\mathbf{x}^{\mathrm{tmp}}\,\big|\,\mathbf{x}^{(i-1)};L\big)
  \label{eq:acceptance_cond}
\end{equation}
is satisfied, where $\mathbf{x}^{\mathrm{tmp}}$ is obtained from \eqref{eq:prox_qp} for the current value of $L$. Since $\nabla\tilde{h}$ scales linearly with $V$, the value of $L$ is lower-bounded by $L_{\mathrm{init}}V$. To reduce the number of trials, the search is warm-started from the previously accepted value reduced by a factor $\eta$. Since \eqref{eq:acceptance_cond} holds for every $L\ge L_{\mathrm{loc}}$, at most $\lceil\log_{\eta}(L_{\mathrm{loc}}/(L_{\mathrm{init}}V))\rceil^{+}+1$ trials are required in each iteration, and every accepted value within slot $t$ satisfies
$L_{\mathrm{init}}V\le L\le\bar{L}\triangleq\max\{\eta L_{\mathrm{loc}},\,L_{\mathrm{start}}\}$,
where $L_{\mathrm{start}}$ denotes the value of $L$ at the beginning of the slot. Furthermore, the accepted iterate satisfies
\begin{align}
  \tilde{F}\big(\mathbf{x}^{(i)}\big)
  &\;\ge\;g\big(\mathbf{x}^{(i)}\,\big|\,\mathbf{x}^{(i-1)};L\big)\nonumber\\
  &\;\ge\;g\big(\mathbf{x}^{(i-1)}\,\big|\,\mathbf{x}^{(i-1)};L\big)
   +\tfrac{L}{2}\big\|\mathbf{x}^{(i)}-\mathbf{x}^{(i-1)}\big\|_{2}^{2}\nonumber\\
  &\;=\;\tilde{F}\big(\mathbf{x}^{(i-1)}\big)
   +\tfrac{L}{2}\big\|\mathbf{x}^{(i)}-\mathbf{x}^{(i-1)}\big\|_{2}^{2},
  \label{eq:monotone}
\end{align}
where the first inequality is \eqref{eq:acceptance_cond}, the second follows from the strong concavity of the surrogate at its maximizer $\mathbf{x}^{(i)}$, and the equality follows from \eqref{eq:mm_conditions}. A finite cap $J_{\max}$ on the number of trials is imposed to guard against numerical ill-conditioning; if the cap is reached, the algorithm retains $\mathbf{x}^{(i-1)}$, which satisfies \eqref{eq:monotone} with equality.

\begin{remark}[Sparse repositioning]\label{rem:sparse}
The solution of \eqref{eq:prox_qp} exhibits the soft-thresholding structure of the unconstrained proximal operator. Let $\tau\triangleq Q_{t}c_{\mathrm{e}}/L$ and $\mathcal{S}_{\tau}(u)\triangleq\mathrm{sign}(u)\,[\,|u|-\tau\,]^{+}$. If no spacing constraint \eqref{eq:prox_qp_space} involving antenna $m$ is active at $\mathbf{x}^{\mathrm{tmp}}$, the optimality conditions of \eqref{eq:prox_problem} decouple for that antenna, yielding
\begin{equation}
  x^{\mathrm{tmp}}_{m}
  =\Pi_{[x_{m}^{\min},\,x_{m}^{\max}]}
  \Big(x^{\mathrm{prev}}_{m}
  +\mathcal{S}_{\tau}\big(z^{(i)}_{m}-x^{\mathrm{prev}}_{m}\big)\Big),
  \label{eq:soft_threshold}
\end{equation}
where $\Pi_{[x_{m}^{\min},\,x_{m}^{\max}]}$ denotes the projection onto the interval. Since $x_{m}^{\min}\le x^{\mathrm{prev}}_{m}\le x_{m}^{\max}$, every such antenna with $|z^{(i)}_{m}-x^{\mathrm{prev}}_{m}|\le\tau$ remains exactly stationary. When $Q_{t}=0$, the update reduces to that of a purely delay-aware design, whereas for a given gradient step $\mathbf{z}^{(i)}$, a larger $Q_{t}$ raises the threshold and holds more antennas in place. The only exception arises when an active spacing constraint couples an antenna to a displaced neighbor.
\end{remark}

\subsection{Overall Algorithm}
\label{subsec:overall}

The complete procedure of LOMA is summarized in Algorithm~\ref{alg:prox_scamm}. In each slot, the outer loop (lines~\ref{ln:outer_begin}--\ref{ln:outer_end}) acquires the channel state, invokes the inner loop, and updates the virtual queue. The inner loop (lines~\ref{ln:inner_begin}--\ref{ln:inner_end}) solves $\mathcal{P}_{2}$ by the proximal MM procedure, starting from $\mathbf{x}^{(0)}=\mathbf{x}^{\mathrm{prev}}$, and terminates when the relative improvement
\begin{equation}
  \chi^{(i)}\triangleq
  \frac{\tilde{F}(\mathbf{x}^{(i)})-\tilde{F}(\mathbf{x}^{(i-1)})}
       {\big|\tilde{F}(\mathbf{x}^{(i-1)})\big|+\tilde{F}_{0}}
  \label{eq:stopping}
\end{equation}
falls below a tolerance $\delta>0$ or when the iteration count reaches $I_{\max}$, where $\tilde{F}_{0}\triangleq VT_s\tilde{\gamma}^{\mathrm{sum}}(\mathbf{x}^{\mathrm{prev}};\bm{\xi}^{(t)})$ normalizes the improvement so that the criterion is invariant to the scaling of $V$. No absolute value is needed in the numerator, since $\{\tilde F(\mathbf{x}^{(i)})\}$ is non-decreasing by~\eqref{eq:monotone}.

\begin{remark}[Unconstrained operation]\label{rem:myopic}
Setting $Q_{t}\equiv0$, i.e., skipping the queue update in line~\ref{ln:queue_update}, removes the energy penalty from $\mathcal{P}_{2}$, and Algorithm~\ref{alg:prox_scamm} reduces to a purely delay-aware, slot-by-slot maximization of the effective throughput, which we refer to as the \emph{myopic design}. It does not satisfy \eqref{eq:P1_energy} in general, and serves in Section~\ref{sec:simulation} as a reference that indicates how much throughput the energy budget costs. The monotonicity and the per-slot complexity established in this section carry over unchanged, with $\tau=0$ in Remark~\ref{rem:sparse}.
\end{remark}

\begin{algorithm}[t]
\caption{Lyapunov-Based Online Movable-Antenna Optimization (LOMA)}
\label{alg:prox_scamm}
\begin{algorithmic}[1]
\Require Initial APV $\mathbf{x}_{0}$, control parameter $V$, energy budget
  $E_{\mathrm{avg}}$, tolerance $\delta$, maximum inner iterations
  $I_{\max}$, maximum backtracking trials $J_{\max}$, backtracking factor
  $\eta>1$, initial Lipschitz scale $L_{\mathrm{init}}$
\Ensure APV sequence $\{\mathbf{x}_{t}\}_{t\ge1}$
\State $Q_{1}\leftarrow 0$,\; $L\leftarrow L_{\mathrm{init}}V$
\For{$t=1,2,\dots$} \label{ln:outer_begin}
  \State Acquire $\bm{\xi}^{(t)}$;\; $\mathbf{x}^{\mathrm{prev}}\leftarrow\mathbf{x}_{t-1}$;\; compute~\eqref{eq:proj_bounds}
  \State $\mathbf{x}^{(0)}\leftarrow\mathbf{x}^{\mathrm{prev}}$;\;
    $\tilde{F}_{0}\leftarrow VT_s\tilde{\gamma}^{\mathrm{sum}}(\mathbf{x}^{\mathrm{prev}};\bm{\xi}^{(t)})$
  \For{$i=1$ \textbf{to} $I_{\max}$} \label{ln:inner_begin}
    \State Evaluate $\tilde{h}(\mathbf{x}^{(i-1)})$ and
      $\nabla\tilde{h}(\mathbf{x}^{(i-1)})$ via~\eqref{eq:h_tilde}
    \State $L\leftarrow\max\{L/\eta,\,L_{\mathrm{init}}V\}$;\;
      $\mathbf{x}^{(i)}\leftarrow\mathbf{x}^{(i-1)}$
    \For{$j=1$ \textbf{to} $J_{\max}$}
      \State $\mathbf{z}^{(i)}\leftarrow\mathbf{x}^{(i-1)}
        +\tfrac{1}{L}\nabla\tilde{h}(\mathbf{x}^{(i-1)})$
      \State Obtain $\mathbf{x}^{\mathrm{tmp}}$ by solving the QP~\eqref{eq:prox_qp} \label{ln:prox}
      \If{\eqref{eq:acceptance_cond} holds}
        \State $\mathbf{x}^{(i)}\leftarrow\mathbf{x}^{\mathrm{tmp}}$;\; \textbf{break}
      \EndIf
      \State $L\leftarrow\eta L$
    \EndFor
    \State \textbf{if} $\chi^{(i)}\le\delta$ in~\eqref{eq:stopping} \textbf{then break}
  \EndFor \label{ln:inner_end}
  \State $\mathbf{x}_{t}\leftarrow\mathbf{x}^{(i)}$
  \State $Q_{t+1}\leftarrow\big[\,Q_{t}
    +c_{\mathrm{e}}\|\mathbf{x}_{t}-\mathbf{x}_{t-1}\|_{1}
    -E_{\mathrm{avg}}\,\big]^{+}$ \label{ln:queue_update}
\EndFor \label{ln:outer_end}
\end{algorithmic}
\end{algorithm}

\subsection{Convergence, Performance Guarantees, and Complexity}
\label{subsec:convergence}

This subsection analyzes the properties of LOMA. We first establish the per-slot convergence of the inner loop. We then show that the long-term energy budget is met for every channel realization and that LOMA attains the benefit of reallocating the energy budget over time. Finally, we characterize its computational complexity.

\begin{proposition}[Monotonicity and convergence]\label{prop:convergence}
For each slot $t$, the inner loop of Algorithm~\ref{alg:prox_scamm} generates a sequence $\{\mathbf{x}^{(i)}\}\subset\mathcal{A}$ satisfying
\begin{equation}
  \tilde{F}\big(\mathbf{x}^{(i)}\big)\ge\tilde{F}\big(\mathbf{x}^{(i-1)}\big)
  +\tfrac{L_{\mathrm{init}}V}{2}\big\|\mathbf{x}^{(i)}-\mathbf{x}^{(i-1)}\big\|_{2}^{2},
  \quad\forall i\ge1,
  \label{eq:suff_ascent}
\end{equation}
and $\{\tilde{F}(\mathbf{x}^{(i)})\}$ converges to a finite limit. Moreover, if $J_{\max}\ge\lceil\log_{\eta}(L_{\mathrm{loc}}/(L_{\mathrm{init}}V))\rceil^{+}+1$ and the inner loop is run without termination, then every limit point $\bar{\mathbf{x}}$ of $\{\mathbf{x}^{(i)}\}$ is a stationary point of $\max_{\mathbf{x}\in\mathcal{A}}\tilde{F}(\mathbf{x})$, i.e.,
\begin{equation}
  \mathbf{0}\in-\nabla\tilde{h}(\bar{\mathbf{x}})
  +\partial\psi(\bar{\mathbf{x}})
  +\mathcal{N}_{\mathcal{A}}(\bar{\mathbf{x}}),
  \label{eq:stationarity}
\end{equation}
where $\mathcal{N}_{\mathcal{A}}(\cdot)$ denotes the normal cone of
$\mathcal{A}$.
\end{proposition}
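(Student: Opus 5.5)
The argument has three parts: sufficient ascent, convergence of the objective values, and stationarity of the limit points.

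\emph{Sufficient ascent.} Feasibility $\{\mathbf{x}^{(i)}\}\subset\mathcal{A}$ holds by construction. Each accepted iterate is the $\mathbf{x}$-component of a solution of \eqref{eq:prox_qp}, whose constraints describe exactly $\mathcal{A}$. If the cap $J_{\max}$ is reached, the algorithm keeps $\mathbf{x}^{(i-1)}\in\mathcal{A}$, and the recursion starts from $\mathbf{x}^{(0)}=\mathbf{x}^{\mathrm{prev}}\in\mathcal{A}$. For \eqref{eq:suff_ascent}, start from \eqref{eq:monotone}. Every accepted $L$ satisfies $L\ge L_{\mathrm{init}}V$, because the warm start in the algorithm takes $\max\{L/\eta,L_{\mathrm{init}}V\}$ and backtracking only increases $L$. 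Hence $\tfrac{L}{2}\|\cdot\|_2^2\ge\tfrac{L_{\mathrm{init}}V}{2}\|\cdot\|_2^2$. In the capped case, \eqref{eq:suff_ascent} holds trivially with equality. Next, $\tilde F$ is continuous on the compact set $\mathcal{A}$ and therefore bounded above; more concretely, $\tilde F\le VT_s\Gamma^{\max}$ since $\psi\ge0$. The sequence $\{\tilde F(\mathbf{x}^{(i)})\}$ is non-decreasing and bounded above, so it converges.

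\emph{Stationarity of limit points.} Assume $J_{\max}$ is at least the stated bound and the loop never terminates. Then every inner iteration accepts some $L$. Since \eqref{eq:acceptance_cond} holds for all $L\ge L_{\mathrm{loc}}$ by \eqref{eq:minorant_h}, the trial sequence $L_{\mathrm{init}}V\eta^{j}$ (or a larger warm start) crosses $L_{\mathrm{loc}}$ within the allowed number of trials. Every accepted $L$ therefore lies in $[L_{\mathrm{init}}V,\bar L]$. Summing \eqref{eq:suff_ascent} telescopically and using convergence of $\tilde F(\mathbf{x}^{(i)})$ gives $\sum_i\|\mathbf{x}^{(i)}-\mathbf{x}^{(i-1)}\|_2^2<\infty$, so $\|\mathbf{x}^{(i)}-\mathbf{x}^{(i-1)}\|_2\to0$. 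The optimality condition of the convex problem \eqref{eq:prox_problem} at $\mathbf{x}^{(i)}$, with parameter $L_i$, reads
\begin{equation*}
\mathbf{0}\in L_i\big(\mathbf{x}^{(i)}-\mathbf{x}^{(i-1)}\big)-\nabla\tilde h(\mathbf{x}^{(i-1)})+\partial\psi(\mathbf{x}^{(i)})+\mathcal{N}_{\mathcal{A}}(\mathbf{x}^{(i)}).
\end{equation*}
Take a subsequence $\mathbf{x}^{(i_k)}\to\bar{\mathbf{x}}$. Then $\mathbf{x}^{(i_k-1)}\to\bar{\mathbf{x}}$ as well, and $L_{i_k}(\mathbf{x}^{(i_k)}-\mathbf{x}^{(i_k-1)})\to\mathbf{0}$ because $L_{i_k}\le\bar L$. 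Continuity of $\nabla\tilde h$ gives $\nabla\tilde h(\mathbf{x}^{(i_k-1)})\to\nabla\tilde h(\bar{\mathbf{x}})$.

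\emph{Main obstacle: passing to the limit in the set-valued terms.} I would handle this via the closedness of the graphs of $\partial\psi$ and $\mathcal{N}_{\mathcal{A}}$. Write the inclusion as $\mathbf{0}=\mathbf{r}_k-\nabla\tilde h(\mathbf{x}^{(i_k-1)})+\mathbf{p}_k+\mathbf{n}_k$ with $\mathbf{p}_k\in\partial\psi(\mathbf{x}^{(i_k)})$ and $\mathbf{n}_k\in\mathcal{N}_{\mathcal{A}}(\mathbf{x}^{(i_k)})$. The vectors $\mathbf{p}_k$ are uniformly bounded, since $\partial\psi\subseteq Q_tc_{\mathrm{e}}[-1,1]^M$. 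Hence $\mathbf{n}_k=-\mathbf{r}_k+\nabla\tilde h(\mathbf{x}^{(i_k-1)})-\mathbf{p}_k$ is bounded too. Passing to a further subsequence, $\mathbf{p}_k\to\bar{\mathbf{p}}$ and $\mathbf{n}_k\to\bar{\mathbf{n}}$. Outer semicontinuity of the subdifferential of a convex continuous function and of the normal cone of a closed convex set gives $\bar{\mathbf{p}}\in\partial\psi(\bar{\mathbf{x}})$ and $\bar{\mathbf{n}}\in\mathcal{N}_{\mathcal{A}}(\bar{\mathbf{x}})$. Passing to the limit in the identity then yields \eqref{eq:stationarity}.
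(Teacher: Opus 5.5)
Your proof is correct and follows the paper's argument essentially step for step: feasibility by construction, sufficient ascent from \eqref{eq:monotone} with $L\ge L_{\mathrm{init}}V$, telescoping to get $\|\mathbf{x}^{(i)}-\mathbf{x}^{(i-1)}\|_{2}\to0$, and passing to the limit along a subsequence in the optimality condition of \eqref{eq:prox_problem}, using that the accepted $L$ stays bounded. The only difference is in the last step, and it is minor: the paper merges the two set-valued terms into $\partial(\psi+\iota_{\mathcal{A}})$ and uses the closed graph of this single convex subdifferential, which makes your (correct) separate boundedness argument for $\mathbf{p}_k$ and $\mathbf{n}_k$ unnecessary.
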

\begin{IEEEproof}
Every accepted iterate is the $\mathbf{x}$-component of a solution of \eqref{eq:prox_qp}, and the fallback retains a feasible point; hence $\{\mathbf{x}^{(i)}\}\subset\mathcal{A}$. Inequality \eqref{eq:suff_ascent} follows from \eqref{eq:monotone} and $L\ge L_{\mathrm{init}}V$ for an accepted iterate, and holds with equality for the fallback. Since $\mathcal{A}$ is compact and $\tilde{F}$ is continuous, $\tilde{F}$ is bounded above on $\mathcal{A}$, and the non-decreasing sequence $\{\tilde{F}(\mathbf{x}^{(i)})\}$ converges to a finite limit. Summing \eqref{eq:suff_ascent} over $i$ further gives $\sum_{i}\|\mathbf{x}^{(i)}-\mathbf{x}^{(i-1)}\|_{2}^{2}<\infty$, and hence $\|\mathbf{x}^{(i)}-\mathbf{x}^{(i-1)}\|_{2}\to0$.

Under the condition on $J_{\max}$, every iteration is accepted with some $L^{(i)}\in[L_{\mathrm{init}}V,\bar{L}]$. The optimality condition of \eqref{eq:prox_problem} at $\mathbf{x}^{(i)}$ reads
\begin{equation}
  \mathbf{0}\in L^{(i)}\big(\mathbf{x}^{(i)}-\mathbf{x}^{(i-1)}\big)
  -\nabla\tilde{h}\big(\mathbf{x}^{(i-1)}\big)
  +\partial\psi\big(\mathbf{x}^{(i)}\big)
  +\mathcal{N}_{\mathcal{A}}\big(\mathbf{x}^{(i)}\big).
  \label{eq:prox_opt}
\end{equation}
Let $\{\mathbf{x}^{(i_k)}\}$ be a subsequence converging to $\bar{\mathbf{x}}$; then $\mathbf{x}^{(i_k-1)}\to\bar{\mathbf{x}}$ as well. Along this subsequence, the first term in \eqref{eq:prox_opt} vanishes since $L^{(i)}$ is bounded, and $\nabla\tilde{h}(\mathbf{x}^{(i_k-1)})\to\nabla\tilde{h}(\bar{\mathbf{x}})$ by continuity. Since $\psi$ is real-valued and convex, $\partial\psi+\mathcal{N}_{\mathcal{A}}=\partial(\psi+\iota_{\mathcal{A}})$, where $\iota_{\mathcal{A}}$ denotes the indicator function of $\mathcal{A}$, and the subdifferential of this closed convex function has a closed graph~\cite{beck2017}. Passing to the limit establishes \eqref{eq:stationarity}.
\end{IEEEproof}

\vspace{0.1cm}
\noindent{\bf Energy Guarantee.}
The following theorem shows that the energy budget is met for every channel realization.

\begin{theorem}[Energy guarantee]\label{thm:energy}
Let $U^{\max}\triangleq T_s\Gamma^{\max}$ and $e_{\max}\triangleq c_{\mathrm{e}}MvT_s$, and suppose that $0<E_{\mathrm{avg}}\le e_{\max}$ and $\epsilon+\mu\ln M\le vT_s$. For any $V>0$, Algorithm~\ref{alg:prox_scamm} with $Q_{1}=0$ satisfies, for every $t\ge1$ and every realization of the channel process,
\begin{equation}
  Q_{t}\;\le\;Q^{\max}\triangleq\frac{VU^{\max}}{E_{\mathrm{avg}}}+e_{\max},
  \label{eq:queue_bound}
\end{equation}
and consequently, for every horizon $T\ge1$,
\begin{equation}
  \frac{1}{T}\sum_{t=1}^{T}e_{t}\;\le\;E_{\mathrm{avg}}+\frac{Q^{\max}}{T}.
  \label{eq:energy_bound}
\end{equation}
\end{theorem}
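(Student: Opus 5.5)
The plan is a deterministic, sample-path argument. It rests on one per-slot inequality: because LOMA starts from the stay-put point and never decreases $\tilde F$, the energy spent in slot $t$ is priced out once $Q_t$ is large. From that inequality, a standard threshold induction gives \eqref{eq:queue_bound}, and telescoping the queue recursion \eqref{eq:virtual_queue} gives \eqref{eq:energy_bound}.

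\emph{Step 1 (per-slot inequality).} The inner loop is initialized at $\mathbf{x}^{(0)}=\mathbf{x}^{\mathrm{prev}}$. By \eqref{eq:suff_ascent} of Proposition~\ref{prop:convergence}, the returned $\mathbf{x}_t$ satisfies $\tilde F(\mathbf{x}_t)\ge\tilde F(\mathbf{x}_{t-1})$. This holds whenever the loop stops, including early termination and the $J_{\max}$ fallback, since the sequence is non-decreasing.

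For the lower bound, at $\mathbf{u}=\mathbf{0}$ we have $\Phi_{\mu,\epsilon}(\mathbf{0})=\mu\ln(Me^{\epsilon/\mu})=\epsilon+\mu\ln M$. The hypothesis $\epsilon+\mu\ln M\le vT_s$ therefore gives
\[
\tilde F(\mathbf{x}_{t-1})=V\bigl(T_s-(\epsilon+\mu\ln M)/v\bigr)\tilde\gamma^{\mathrm{sum}}(\mathbf{x}_{t-1};\bm{\xi}^{(t)})\ge0 .
\]
For the upper bound, $\Phi_{\mu,\epsilon}\ge0$ by \eqref{eq:lse_bound}, so $\tilde h(\mathbf{x}_t)\le VT_s\Gamma^{\max}=VU^{\max}$. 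Hence $\tilde F(\mathbf{x}_t)\le VU^{\max}-Q_t e_t$. Combining the two bounds yields the key inequality $Q_t e_t\le VU^{\max}$ for every $t$ and every channel realization.

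\emph{Step 2 (queue bound by induction).} The base case $Q_1=0\le Q^{\max}$ is immediate. Assume $Q_t\le Q^{\max}$ and split into two cases.
\begin{itemize}
\item If $Q_t\ge VU^{\max}/E_{\mathrm{avg}}$, Step 1 gives $e_t\le VU^{\max}/Q_t\le E_{\mathrm{avg}}$. Then $Q_{t+1}=[Q_t+e_t-E_{\mathrm{avg}}]^+\le Q_t\le Q^{\max}$.
\item If $Q_t<VU^{\max}/E_{\mathrm{avg}}$, Lemma~\ref{lem:bounded} gives $e_t\le e_{\max}$. Then $Q_{t+1}\le Q_t+e_{\max}<VU^{\max}/E_{\mathrm{avg}}+e_{\max}=Q^{\max}$.
\end{itemize}
This establishes \eqref{eq:queue_bound}.

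\emph{Step 3 (energy bound).} From \eqref{eq:virtual_queue}, $Q_{t+1}\ge Q_t+e_t-E_{\mathrm{avg}}$. Summing over $t=1,\dots,T$ telescopes to
\[
\sum_{t=1}^{T}e_t\le TE_{\mathrm{avg}}+Q_{T+1}-Q_1\le TE_{\mathrm{avg}}+Q^{\max}.
\]
Dividing by $T$ gives \eqref{eq:energy_bound}.

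I expect Step 1 to be the main obstacle. The argument must use the smoothed objective $\tilde F$ that the algorithm actually ascends, not $F$. This is exactly where the condition $\epsilon+\mu\ln M\le vT_s$ is needed: it keeps $\tilde F$ at the stay-put point nonnegative, even though smoothing makes $\Phi_{\mu,\epsilon}(\mathbf{0})>0$. One must also confirm that monotonicity survives early stopping and the backtracking fallback, which the proof of Proposition~\ref{prop:convergence} already covers. The assumption $E_{\mathrm{avg}}\le e_{\max}$ plays no essential role in this argument; I would check whether it is needed only for interpretability of $Q^{\max}$.
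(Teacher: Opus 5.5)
Your proposal is correct and follows the paper's proof essentially step for step: monotone ascent of $\tilde F$ from the stay-put initialization, nonnegativity of $\tilde F(\mathbf{x}_{t-1})$ via $\Phi_{\mu,\epsilon}(\mathbf{0})=\epsilon+\mu\ln M\le vT_s$, the resulting bound $Q_te_t\le VU^{\max}$, the threshold induction, and telescoping of \eqref{eq:virtual_queue}. The differences are cosmetic. You bound $\tilde h(\mathbf{x}_t)\le VU^{\max}$ directly using $\Phi_{\mu,\epsilon}\ge0$, whereas the paper combines $F(\mathbf{x}_t)\ge\tilde F(\mathbf{x}_t)$ from Proposition~\ref{prop:lse_bound} with $F(\mathbf{x}_t)=VU_t-Q_te_t$, which gives the slightly sharper $Q_te_t\le VU_t$. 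The paper also splits cases on the strict inequality $Q_t>VU^{\max}/E_{\mathrm{avg}}$, which guarantees $Q_t>0$ before dividing by $Q_t$.
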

\begin{IEEEproof}
Since the inner loop is initialized at $\mathbf{x}^{(0)}=\mathbf{x}_{t-1}$ and is monotonically ascending in $\tilde{F}$ by Proposition~\ref{prop:convergence}, irrespective of the iteration at which it terminates, the returned APV satisfies
\begin{align}
  F(\mathbf{x}_{t})
  &\ge\tilde{F}(\mathbf{x}_{t})
   \ge\tilde{F}(\mathbf{x}_{t-1})\nonumber\\
  &=V\Big(T_s-\frac{\epsilon+\mu\ln M}{v}\Big)
    \tilde{\gamma}^{\mathrm{sum}}(\mathbf{x}_{t-1};\bm{\xi}^{(t)})
   \ge0,
  \label{eq:F_nonneg}
\end{align}
where the first inequality follows from Proposition~\ref{prop:lse_bound}, the equality from $\Phi_{\mu,\epsilon}(\mathbf{0})=\epsilon+\mu\ln M$ and $\psi(\mathbf{x}_{t-1})=0$, and the last inequality from $\epsilon+\mu\ln M\le vT_s$. Since $F(\mathbf{x}_{t})=VU_{t}-Q_{t}e_{t}$, it follows that $Q_{t}e_{t}\le VU_{t}\le VU^{\max}$ by Lemma~\ref{lem:bounded}. Hence, whenever $Q_{t}>VU^{\max}/E_{\mathrm{avg}}$, we have $e_{t}\le VU^{\max}/Q_{t}<E_{\mathrm{avg}}$, so that $Q_{t+1}<Q_{t}$ by \eqref{eq:virtual_queue}. Since $Q_{t+1}\le Q_{t}+e_{\max}$ always holds, we show $Q_{t}\le Q^{\max}$ by induction. The base case $Q_{1}=0\le Q^{\max}$ holds trivially. Suppose $Q_{t}\le Q^{\max}$. If $Q_{t}\le VU^{\max}/E_{\mathrm{avg}}$, then $Q_{t+1}\le Q_{t}+e_{\max}\le VU^{\max}/E_{\mathrm{avg}}+e_{\max}=Q^{\max}$. Otherwise $Q_{t}>VU^{\max}/E_{\mathrm{avg}}$, and the argument above gives $Q_{t+1}<Q_{t}\le Q^{\max}$. In either case $Q_{t+1}\le Q^{\max}$, establishing \eqref{eq:queue_bound}. For \eqref{eq:energy_bound}, the queue update implies $Q_{t+1}\ge Q_{t}+e_{t}-E_{\mathrm{avg}}$; summing over $t=1,\dots,T$ yields $\sum_{t=1}^{T}(e_{t}-E_{\mathrm{avg}})\le Q_{T+1}\le Q^{\max}$, which gives \eqref{eq:energy_bound} upon dividing by $T$.
\end{IEEEproof}

Theorem~\ref{thm:energy} is a sample-path result: it requires neither a statistical model of the channel process nor the global optimality of the per-slot solution, and it holds even if the inner loop is terminated early. The condition $E_{\mathrm{avg}}\le e_{\max}$ is without loss of generality, since a budget exceeding the maximum per-slot consumption never becomes active. The energy budget~\eqref{eq:P1_energy} is thus met within $\mathcal{O}(V/T)$ at every horizon, where $V$ governs the transient: a larger $V$ admits a larger backlog and hence a longer interval over which the instantaneous expenditure may exceed $E_{\mathrm{avg}}$. A constant weight on the movement energy, tuned offline to meet the budget on average, would forgo this guarantee: its consumption depends on how closely the realized channel resembles those used for tuning, whereas the queue adjusts the price to the energy actually consumed.

Note that the guarantee does not require $F(\mathbf{x}_{t})\ge F(\mathbf{x}_{t-1})$. Since the LSE smoothing removes the kink of $\|\cdot\|_{\infty}$ at the origin, the delay penalty of displacements on the order of $\epsilon$ is underestimated, and an ascent in $\tilde{F}$ may correspond to a slight loss in $F$ within a single slot.

\vspace{0.1cm}
\noindent{\bf Benefit of Energy Reallocation.}
We next show that LOMA exploits the long-term nature of the energy budget. Partition the slots into frames $\mathcal{T}_{n}\triangleq\{(n-1)R+1,\dots,nR\}$ of length $R\ge1$. For a given sequence $\{\mathbf{y}_{t}\}$ with $\mathbf{y}_{t}\in\mathcal{A}(\mathbf{x}_{t-1})$, let $C_{t}\triangleq\big[F(\mathbf{y}_{t})-F(\mathbf{x}_{t})\big]^{+}$ denote the excess of the per-slot objective of $\mathbf{y}_{t}$ over that of LOMA, where $F$ is evaluated with the queue backlog $Q_{t}$ of LOMA. Since $F(\mathbf{x}_{t})\ge0$ by \eqref{eq:F_nonneg} and $F\le VU^{\max}$ on $\mathcal{A}$, it holds that $0\le C_{t}\le VU^{\max}$.

\begin{proposition}[Energy reallocation]\label{prop:reallocation}
Let $\{\mathbf{y}_{t}\}$ be any sequence with $\mathbf{y}_{t}\in\mathcal{A}(\mathbf{x}_{t-1})$ that satisfies the energy budget over each frame, i.e., $\sum_{t\in\mathcal{T}_{n}}e_{t}(\mathbf{y}_{t},\mathbf{x}_{t-1})\le RE_{\mathrm{avg}}$ for all $n$, where $\{\mathbf{y}_{t}\}$ may depend on future channel states. Under the conditions of Theorem~\ref{thm:energy},
\begin{align}
  &\liminf_{T\to\infty}\frac{1}{T}\sum_{t=1}^{T}U_{t}(\mathbf{x}_{t},\mathbf{x}_{t-1})
  \ge\;\liminf_{T\to\infty}\frac{1}{T}\sum_{t=1}^{T}U_{t}(\mathbf{y}_{t},\mathbf{x}_{t-1})
  \nonumber\\
  &\qquad\qquad\qquad\qquad -\frac{e_{\max}E_{\mathrm{avg}}+4RE_{\mathrm{avg}}^{2}}{V}-\bar{c},
  \label{eq:reallocation}
\end{align}
where $\bar{c}\triangleq\limsup_{T\to\infty}\frac{1}{VT}\sum_{t=1}^{T}C_{t}$.
\end{proposition}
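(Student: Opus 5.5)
The plan is a deterministic, sample-path drift-plus-penalty argument. It combines three ingredients: the per-slot comparison between $\mathbf{x}_t$ and $\mathbf{y}_t$ encoded in $C_t$, a frame-based treatment of the comparator's energy term, and the bound $Q_t\le Q^{\max}$ from Theorem~\ref{thm:energy} to dispose of the telescoped Lyapunov terms.

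\textbf{Per-slot comparison.} First, I use the definition of $C_t$ directly. Since $F(\mathbf{x}_t)\ge F(\mathbf{y}_t)-C_t$ and $F=VU_t-Q_te_t$ with LOMA's backlog $Q_t$, we get
\begin{equation*}
Q_t\big(e_t(\mathbf{x}_t)-E_{\mathrm{avg}}\big)-VU_t(\mathbf{x}_t)\le Q_t\big(e_t(\mathbf{y}_t)-E_{\mathrm{avg}}\big)-VU_t(\mathbf{y}_t)+C_t .
\end{equation*}
Here $e_t(\cdot)$ and $U_t(\cdot)$ are always taken with respect to $\mathbf{x}_{t-1}$.

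\textbf{Drift and summation.} Next, I combine this with the first inequality in \eqref{eq:drift_raw}, without expectations:
\begin{equation*}
\tfrac12Q_{t+1}^2-\tfrac12Q_t^2\le\tfrac12\big(e_t(\mathbf{x}_t)-E_{\mathrm{avg}}\big)^2+Q_t\big(e_t(\mathbf{x}_t)-E_{\mathrm{avg}}\big).
\end{equation*}
Summing over $t=1,\dots,T$ with $Q_1=0$ and $Q_{T+1}^2\ge0$ gives
\begin{equation*}
V\sum_{t}U_t(\mathbf{y}_t)-V\sum_tU_t(\mathbf{x}_t)\le\sum_t\tfrac12\big(e_t(\mathbf{x}_t)-E_{\mathrm{avg}}\big)^2+\sum_tQ_t\big(e_t(\mathbf{y}_t)-E_{\mathrm{avg}}\big)+\sum_tC_t .
\end{equation*}
I bound the quadratic term more tightly than $C_0$. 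Using $(e-E_{\mathrm{avg}})^2\le e^2+E_{\mathrm{avg}}^2\le e_{\max}e+e_{\max}E_{\mathrm{avg}}$ together with \eqref{eq:energy_bound}, i.e., $\sum_te_t(\mathbf{x}_t)\le TE_{\mathrm{avg}}+Q^{\max}$, the quadratic sum is at most
\begin{equation*}
Te_{\max}E_{\mathrm{avg}}+\tfrac12e_{\max}Q^{\max}.
\end{equation*}

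\textbf{Comparator's energy term (main obstacle).} The main step is $\sum_tQ_t\big(e_t(\mathbf{y}_t)-E_{\mathrm{avg}}\big)$, since $\mathbf{y}_t$ only meets the budget framewise. Within frame $\mathcal{T}_n$ with first slot $t_0$, I write
\begin{equation*}
Q_t\big(e_t(\mathbf{y}_t)-E_{\mathrm{avg}}\big)=Q_{t_0}\big(e_t(\mathbf{y}_t)-E_{\mathrm{avg}}\big)+(Q_t-Q_{t_0})\,e_t(\mathbf{y}_t)+(Q_{t_0}-Q_t)\,E_{\mathrm{avg}} .
\end{equation*}
The first piece sums over the frame to at most $0$ by the frame budget. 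For the second, the queue dynamics give $Q_t-Q_{t_0}\le\sum_{s\in\mathcal{T}_n}e_s(\mathbf{x}_s)$, and $\sum_{t\in\mathcal{T}_n}e_t(\mathbf{y}_t)\le RE_{\mathrm{avg}}$, so this piece contributes at most $RE_{\mathrm{avg}}\sum_{s\in\mathcal{T}_n}e_s(\mathbf{x}_s)$. Summed over frames and using \eqref{eq:energy_bound} again, that is at most $RE_{\mathrm{avg}}(TE_{\mathrm{avg}}+Q^{\max}+Re_{\max})$, where the last term accounts for a partial final frame. For the third, $Q_{t_0}-Q_t\le(t-t_0)E_{\mathrm{avg}}$, so it contributes at most $\tfrac12R^2E_{\mathrm{avg}}^2$ per frame, i.e., about $\tfrac12RE_{\mathrm{avg}}^2$ per slot. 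Altogether the term is $TRE_{\mathrm{avg}}^2\big(1+\tfrac12\big)+\mathcal{O}(1)$. This sits within the stated $4RE_{\mathrm{avg}}^2$, which leaves slack for the incomplete last frame and for rounding. I expect the bookkeeping here to be the delicate part: one must be careful that $Q_t-Q_{t_0}$ is controlled by LOMA's own consumption rather than by $e_{\max}$, since the latter would give an $R e_{\max}E_{\mathrm{avg}}$ constant.

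\textbf{Conclusion.} Finally, I divide by $VT$. All $T$-independent terms, which involve $Q^{\max}$, $e_{\max}$ and $R$, vanish as $T\to\infty$, and $\frac1{VT}\sum_tC_t$ is handled by its $\limsup$, namely $\bar c$. Rearranging gives $\frac1T\sum_tU_t(\mathbf{x}_t)\ge\frac1T\sum_tU_t(\mathbf{y}_t)-\text{(const)}/V-\frac1{VT}\sum_tC_t-o(1)$. Taking $\liminf$ and using $\liminf(a_T-b_T)\ge\liminf a_T-\limsup b_T$ then yields \eqref{eq:reallocation}.
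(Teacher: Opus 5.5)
Your proof is correct and follows essentially the paper's route. It uses a sample-path summation of the drift inequality combined with the per-slot comparison through $C_t$, anchors the comparator's energy term at the frame-start backlog $Q_{t_0}$ so that the frame budget removes the leading part, and invokes Theorem~\ref{thm:energy} to bound LOMA's cumulative energy.

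The one difference is how you bound the within-frame deviation $(Q_t-Q_{t_0})(e_t(\mathbf{y}_t)-E_{\mathrm{avg}})$. You use one-sided queue increments (upward by at most LOMA's own energy, downward by at most $E_{\mathrm{avg}}$ per slot), whereas the paper uses the two-sided bounds $|Q_t-Q_{t_n}|\le S_n$ and $\sum_{t\in\mathcal{T}_n}|e_t(\mathbf{y}_t)-E_{\mathrm{avg}}|\le 2RE_{\mathrm{avg}}$. Your version gives the sharper constant $\tfrac{3}{2}RE_{\mathrm{avg}}^{2}$ in place of $4RE_{\mathrm{avg}}^{2}$, and so implies the stated bound. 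In the partial last frame the first piece need not be nonpositive, but it only adds a $T$-independent term of order $Q^{\max}RE_{\mathrm{avg}}$; you can also avoid it by taking $T=NR$, as the paper does.
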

\begin{IEEEproof}
Let $U_{t}$ and $e_{t}$ denote the utility and energy of $\mathbf{x}_{t}$, and $\tilde{U}_{t}$ and $\tilde{e}_{t}$ those of $\mathbf{y}_{t}$. Since $0\le e_{t}\le e_{\max}$ and $0<E_{\mathrm{avg}}\le e_{\max}$, we have $|e_{t}-E_{\mathrm{avg}}|\le e_{\max}$, and \eqref{eq:drift_raw} gives $\mathcal{L}(Q_{t+1})-\mathcal{L}(Q_{t})\le\frac{e_{\max}}{2}|e_{t}-E_{\mathrm{avg}}| +Q_{t}(e_{t}-E_{\mathrm{avg}})$. By the definition of $C_{t}$, $Q_{t}e_{t}-VU_{t}\le Q_{t}\tilde{e}_{t}-V\tilde{U}_{t}+C_{t}$. Combining the
two, summing over $t=1,\dots,T$ with $T=NR$, and using $\mathcal{L}(Q_{1})=0$ and $\mathcal{L}(Q_{T+1})\ge0$ yield
\begin{align}
  V\sum_{t=1}^{T}(\tilde{U}_{t}-U_{t})
  \le\;&\frac{e_{\max}}{2}\sum_{t=1}^{T}|e_{t}-E_{\mathrm{avg}}|
  +\sum_{t=1}^{T}C_{t}\nonumber\\
  &+\sum_{n=1}^{N}\sum_{t\in\mathcal{T}_{n}}Q_{t}(\tilde{e}_{t}-E_{\mathrm{avg}}).
  \label{eq:realloc_sum}
\end{align}
Let $S\triangleq\sum_{t=1}^{T}(e_{t}+E_{\mathrm{avg}})$. By \eqref{eq:energy_bound}, $S\le2TE_{\mathrm{avg}}+Q^{\max}$, which also bounds the first sum in \eqref{eq:realloc_sum}. For the last sum, let $t_{n}$ denote the first slot of $\mathcal{T}_{n}$ and write
\begin{align}
  \sum_{t\in\mathcal{T}_{n}}Q_{t}(\tilde{e}_{t}-E_{\mathrm{avg}})
  =\;&Q_{t_{n}}\sum_{t\in\mathcal{T}_{n}}(\tilde{e}_{t}-E_{\mathrm{avg}})
  \nonumber\\
  &+\sum_{t\in\mathcal{T}_{n}}(Q_{t}-Q_{t_{n}})(\tilde{e}_{t}-E_{\mathrm{avg}}).
  \nonumber
\end{align}
The first term is non-positive by the frame budget. Since $|Q_{\tau+1}-Q_{\tau}|\le e_{\tau}+E_{\mathrm{avg}}$ by \eqref{eq:virtual_queue}, we have $|Q_{t}-Q_{t_{n}}|\le S_{n}\triangleq \sum_{\tau\in\mathcal{T}_{n}}(e_{\tau}+E_{\mathrm{avg}})$ for all $t\in\mathcal{T}_{n}$. Moreover, $\sum_{t\in\mathcal{T}_{n}}|\tilde{e}_{t}-E_{\mathrm{avg}}|\le2RE_{\mathrm{avg}}$ by the frame budget. Hence, the second term is at most $2RE_{\mathrm{avg}}S_{n}$, and summing over $n$ gives at most $2RE_{\mathrm{avg}}S$. Substituting these bounds into \eqref{eq:realloc_sum} and dividing by $VT$ yield
\begin{align}
  \frac{1}{T}\sum_{t=1}^{T}(\tilde{U}_{t}-U_{t})
  \le\;&\frac{e_{\max}E_{\mathrm{avg}}+4RE_{\mathrm{avg}}^{2}}{V}
  +\frac{1}{VT}\sum_{t=1}^{T}C_{t}\nonumber\\
  &+\Big(\frac{e_{\max}}{2}+2RE_{\mathrm{avg}}\Big)\frac{Q^{\max}}{VT}.
  \nonumber
\end{align}
Since $Q^{\max}/V=U^{\max}/E_{\mathrm{avg}}+e_{\max}/V$ does not depend on $T$, the last term vanishes as $T\to\infty$. Finally, since $U_{t}$ and $\tilde{U}_{t}$ are bounded, restricting $T$ to multiples of $R$ does not affect the limit inferior, which establishes \eqref{eq:reallocation}.
\end{IEEEproof}

A per-slot energy constraint $e_{t}\le E_{\mathrm{avg}}$ corresponds to $R=1$, whereas a frame budget with $R>1$ additionally allows the energy saved in some slots to be spent in others. Proposition~\ref{prop:reallocation} thus shows that LOMA, without knowledge of future channels, attains the additional throughput obtained from such reallocation within a penalty of $\mathcal{O}(R/V)$, which scales with $E_{\mathrm{avg}}^{2}$ and is therefore small under a tight budget. The term $\bar{c}$ accounts for the slots in which the reference sequence attains a higher per-slot objective than LOMA. If $\{\mathbf{y}_{t}\}$ consists of actions no better than those found by LOMA for $\mathcal{P}_{2}$, e.g., local solutions obtained by the same procedure, then $\bar{c}=0$, and the non-concavity of $\mathcal{P}_{2}$ affects both sides equally. For an arbitrary reference sequence, $\bar{c}$ is at most the smoothing error $\Gamma^{\max}(\epsilon+\mu\ln M)/v$ plus the average gap between $F(\mathbf{y}_t)$ and the global optimum of $\mathcal{P}_2$ that a non-concave local search may leave unclosed.

This benefit is not available to designs that enforce the energy budget in each slot. Such designs cannot carry over unused energy and thus respond slowly when repositioning becomes highly beneficial, e.g., after an abrupt change of the favorable positions, whereas LOMA can spend the energy saved in preceding slots and later compensate for it through the virtual queue. Since Proposition~\ref{prop:reallocation} compares the two along the trajectory of LOMA, the comparison with an independently operated per-slot design is deferred to Section~\ref{sec:simulation}.

\vspace{0.1cm}
\noindent{\bf Complexity.}
Let $N_{\mathrm{p}}\triangleq\sum_{k=1}^{K}(L_{k}+1)$ denote the total number of propagation paths. With analytical gradients, each evaluation of the objective and its gradient under ZF precoding costs $C_{\mathrm{ZF}}\triangleq\mathcal{O}(MN_{\mathrm{p}}+MK^{2}+K^{3})$, where the $K^{3}$ term stems from the matrix inversion in the ZF precoder. Each inner iteration evaluates the objective and its gradient once, at a cost of $C_{\mathrm{ZF}}$, and then solves the QP \eqref{eq:prox_qp} up to $J_{\max}$ times during backtracking, each costing $\mathcal{O}(M^{3})$ operations by an interior-point method, where the number of interior-point iterations is treated as a constant for the small problem size $2M$~\cite{boyd2004}. Letting $I_{\mathrm{BT}}\le J_{\max}$ denote the average number of backtracking trials per inner iteration, the per-slot complexity of LOMA is therefore $\mathcal{O}_{\mathrm{ZF}}=\mathcal{O}(I_{\max}[C_{\mathrm{ZF}}
+I_{\mathrm{BT}}M^{3}])$.
The queue update in line~\ref{ln:queue_update} requires only $\mathcal{O}(M)$ operations. Hence, the long-term energy management and the associated guarantees are obtained with negligible overhead relative to the myopic design of Remark~\ref{rem:myopic}.


\section{Simulation Results}
\label{sec:simulation}

We evaluate LOMA against the benchmark schemes and verify the theoretical analysis. We further examine the conditions under which managing the energy budget over time is beneficial.

\subsection{Simulation Setup}
\label{subsec:setup}

Unless otherwise specified, the parameters in Table~\ref{tab:params} are used. The BS employs ZF precoding with equal power allocation, and the MAs start from the uniform placement in a region that gives each element a range of $1.5\lambda$ on average. The actuator speed is of the same order as that reported in~\cite{ding2025energy}, so that the per-slot reachable range is $vT_s=0.03\lambda$ and reaching a position half a wavelength away takes about $17$ slots, and traversing the whole region takes $250$ slots. The energy coefficient, defined per antenna, is larger than the $0.175$~J/m reported in~\cite{ding2025energy} for an MA at the user side, reflecting the heavier actuator required at a BS array. With $e_{\max}=c_{\mathrm{e}}MvT_s$, the budget corresponds to $0.3$~\textmu J per slot, i.e., an average actuator power of $0.3$~mW, and is well below the consumption of the myopic design. The smoothing parameters of LOMA are $\epsilon=0.05\,vT_s$ and $\mu=\epsilon/\ln M$, which jointly satisfy the condition $\epsilon+\mu\ln M\le vT_s$ of Theorem~\ref{thm:energy}. The field-response parameters are assumed to be perfectly known at each slot, and each result is averaged over $20$ realizations of $T=8\times10^{4}$ slots, i.e., $80$~s, unless noted otherwise. Since $\mathcal{P}_{1}$ is defined through long-term time averages, all statistics are computed over the last $4\times10^{4}$ slots of each horizon, so that they reflect the steady-state performance. The initial transient, in which LOMA spends ahead of its budget while the virtual queue builds up, ends well before this window begins, after which its energy consumption averaged over blocks of $10^{4}$ slots fluctuates around the budget; this transient contributes a term of order $Q^{\max}/T$ by Theorem~\ref{thm:energy} and vanishes as $T$ grows. Over the entire horizon, it raises the average energy of LOMA to between $1.15$ and $1.30$ times the budget. The optimized fixed placement likewise spends about twice the budget on its one-time relocation, whose contribution to the average also vanishes as $T$ grows.

\begin{table}[t]
\caption{Simulation Parameters}
\label{tab:params}
\centering
\resizebox{\columnwidth}{!}{%
\renewcommand{\arraystretch}{1.1}
\begin{tabular}{clll}
\hline
& \textbf{Parameter} & \textbf{Symbol} & \textbf{Value} \\
\hline
\multirow{9}{*}{\rotatebox{90}{\textit{System}}}
& Number of MAs / UEs & $M$ / $K$ & $5$ / $5$ \\
& Carrier frequency (wavelength) & $f_{c}$ ($\lambda$) & $30$~GHz ($1$~cm) \\
& Movable region length & $D$ & $7.5\lambda$ ($=1.5M\lambda$) \\
& Minimum spacing & $d$ & $\lambda/2$ \\
& MA speed & $v$ & $0.3$~m/s \\
& Slot duration & $T_s$ & $1$~ms \\
& Energy coefficient & $c_{\mathrm{e}}$ & $1$~J/m \\
& Transmit power & $P$ & $30$~dBm \\
& Noise power & $\sigma_{n}^{2}$ & $-90$~dBm \\
\hline
\multirow{9}{*}{\rotatebox{90}{\textit{Channel}}}
& Reference path gain & $\beta_{0}$ & $-40$~dB at $1$~m \\
& Path-loss exponent & -- & $2.5$ \\
& Number of NLoS paths & $L_{k}$ & $3$ \\
& Rician factor & $\kappa$ & $10$~dB \\
& NLoS AoDs & $\theta^{\mathrm{NLoS}}_{k,\ell}$ & $\mathcal{U}[15^{\circ},165^{\circ}]$ \\
& UE distance / angle & $r^{(0)}_{k}$ / -- & $\mathcal{U}[30,100]$~m / $\mathcal{U}[30^{\circ},150^{\circ}]$ \\
& Mean event interval (Env.~I) & -- & $3000$ slots \\
& UE speed (Env.~II) & $v_{\mathrm{UE}}$ & $0.2$~m/s \\
& Heading innovation std. (Env.~II) & $\sigma_{\phi}$ & $0.5^{\circ}$ per slot \\
\hline
\multirow{5}{*}{\rotatebox{90}{\textit{Algorithm}}}
& Control parameter & $V$ & $3\times10^{-3}$ \\
& Energy budget & $E_{\mathrm{avg}}$ & $2\times10^{-4}e_{\max}$ \\
& Initial Lipschitz scale & $L_{\mathrm{init}}$ & $10^{5}$ \\
& Backtracking factor / cap & $\eta$ / $J_{\max}$ & $2$ / $20$ \\
& Tolerance / max. iterations & $\delta$ / $I_{\max}$ & $10^{-4}$ / $100$ \\
\hline
\end{tabular}%
}
\end{table}

We consider the deployment that motivates this work, in which UEs are stationary, as in fixed wireless access or indoor machine-type communications, so that $v_{\mathrm{UE}}=0$ in \eqref{eq:ue_mobility} and the propagation geometry of each link is unchanged while a UE is served. The channel then varies solely through the \emph{event-driven variation} of Section~\ref{subsec:temporal}, which is the regime in which the demand for repositioning is concentrated in time. The events form a Bernoulli process with probability $1/3000$ per slot, so that the intervals between them are geometrically distributed with a mean of $3000$ slots and each user is served for $15$~s on average. Since the interval is an order of magnitude longer than the $250$ slots needed to traverse the whole region, the repositioning triggered by one event completes well before the next. At each event the set of served UEs changes, either because a user completes its session and a new one arrives at a different location, or because a different subset of users is scheduled; both are
modeled by redrawing the field-response parameters $\bm{\xi}^{(t)}_{k}$ of one UE independently, which changes its distance, LoS AoD, and scattering geometry. Section~\ref{subsec:perslot} additionally considers a channel governed by continuous variation instead.

\begin{table}[t]
\caption{Comparison of LOMA with the benchmark schemes, with $D=1.5M\lambda$ to keep the antenna density unchanged; $50$ realizations.}
\label{tab:main}
\centering
\renewcommand{\arraystretch}{1.15}
\begin{tabular}{llccc}
\hline
\multirow{2}{*}{$M=K$} & \multirow{2}{*}{Scheme} & Throughput & Energy & Actuations \\
 & & [bit/s/Hz] & / $E_{\mathrm{avg}}$ & per slot \\
 \hline
 \multirow{4}{*}{$4$}
  & FPA        & $34.412$ & $0.000$ & $0.000$ \\
  & Opt. fixed & $34.924$ & $0.000$ & $0.000$ \\
  & LOMA       & $\mathbf{39.418}$ & $1.025$\pmse{0.031} & $1.277$ \\
  & Myopic     & $40.947$ & $2.961$\pmse{0.086} & $3.673$ \\
\hline
\multirow{4}{*}{$5$}
  & FPA        & $39.390$ & $0.000$ & $0.000$ \\
  & Opt. fixed & $42.135$ & $0.000$ & $0.000$ \\
  & LOMA       & $\mathbf{47.183}$ & $0.945$\pmse{0.026} & $1.442$ \\
  & Myopic     & $49.568$ & $2.911$\pmse{0.093} & $4.575$ \\
\hline
\multirow{4}{*}{$8$}
  & FPA        & $57.676$ & $0.000$ & $0.000$ \\
  & Opt. fixed & $59.599$ & $0.000$ & $0.000$ \\
  & LOMA       & $\mathbf{72.837}$ & $1.025$\pmse{0.023} & $2.250$ \\
  & Myopic     & $77.139$ & $3.098$\pmse{0.079} & $7.621$ \\
\hline
\end{tabular}
\end{table}

\subsection{Comparison with Benchmark Schemes}
\label{subsec:comparison}

We evaluate LOMA against benchmarks that isolate the individual aspects of the design. To the best of our knowledge, no existing design addresses the online repositioning of MAs under a long-term movement energy budget. 
The works discussed in Section~\ref{subsec:related} optimize the antenna positions within a single transmission block or on a fixed large timescale, and none of them imposes the reachability constraint~\eqref{eq:feasible_set} that couples successive slots. Their solutions may therefore be unreachable from the current APV and cannot be applied directly slot by slot. We therefore construct benchmarks from $\mathcal{P}_{2}$ itself, which differ only in how the movement energy is treated. All schemes share the same channel realizations, the same initial APV, and the same per-slot solver.

We consider three benchmarks. \emph{FPA} keeps the antennas at the initial uniform placement, which measures the value of repositioning itself. \emph{Optimized fixed} computes offline, by a projected gradient ascent from multiple random initializations, the antenna positions that maximize the sum rate of the first slot, relocates the antennas there at the maximum speed, and holds them thereafter, which isolates the benefit of tracking the channel over time from that of a good static placement. \emph{Myopic} is the unconstrained mode of Remark~\ref{rem:myopic} ($Q_{t}\equiv0$), which maximizes the effective throughput in each slot without any energy budget; together with FPA it indicates the range within which designs meeting the budget operate, and its energy consumption is reported alongside its throughput since it does not satisfy \eqref{eq:P1_energy} in general.

\begin{figure}[t]
\centering
\subfloat[Effective throughput, where the band around the curve indicates one standard error.]{
  \includegraphics[width=0.90\columnwidth]{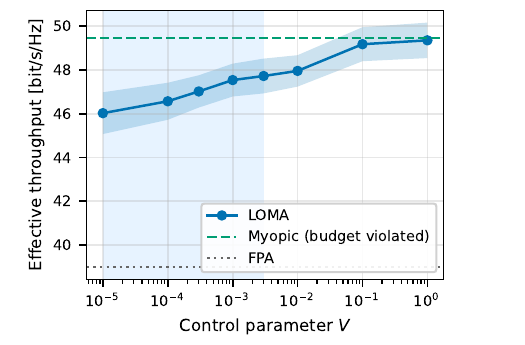}
  \label{fig:v_a}}
\\
\subfloat[Average energy relative to the budget (left axis) and the number of relocated antennas per slot (right axis).]{
  \includegraphics[width=0.90\columnwidth]{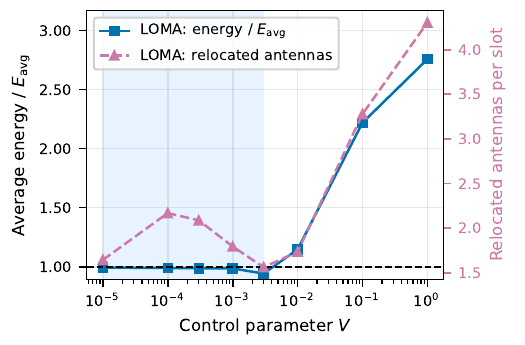}
  \label{fig:v_b}}
\caption{Effect of the control parameter $V$. The shaded region marks the values of $V$ for which the energy budget is met.}
\label{fig:v_sweep}
\end{figure}

Table~\ref{tab:main} compares the schemes for three array sizes. The number of actuations per slot is the number of relocated antennas averaged over all slots, including those in which no antenna moves. FPA and the optimized fixed placement fall well below LOMA: the former never moves, and the latter is optimized once for the initial geometry and cannot follow it after a reconfiguration, so both miss the throughput available once the antennas track the channel. The myopic design attains a higher throughput than LOMA, since it maximizes the same per-slot objective without any energy constraint; the gap between FPA and myopic is therefore the range over which managing the movement energy operates. LOMA recovers most of this range while consuming only about a third of the energy that the myopic design requires, and it meets the budget within the margin of error, in line with Theorem~\ref{thm:energy}: the small excess observed for $M=K=4$ and $8$ is not statistically significant.

The movement statistics explain how this is achieved. LOMA holds a substantial fraction of the antennas stationary in most slots and actuates markedly fewer of them, on average, than the myopic design does every slot. This is the sparse repositioning of Remark~\ref{rem:sparse}: an antenna whose tentative displacement falls below the queue-dependent threshold is held exactly stationary, so that the energy budget is spent on the relocations that are worth their price, rather than being distributed uniformly across all antennas as the myopic design does. Both the ordering of the schemes and the movement statistics are unchanged when the mean interval between events is reduced to $1000$ slots: LOMA still recovers most of the FPA-to-myopic range ($65.1\%$) while actuating far fewer antennas than the myopic design, although the absolute throughputs are lower since the channel is reconfigured three times as often.

\begin{figure}[t]
\centering
\subfloat[Effective throughput.]{
  \includegraphics[width=0.85\columnwidth]{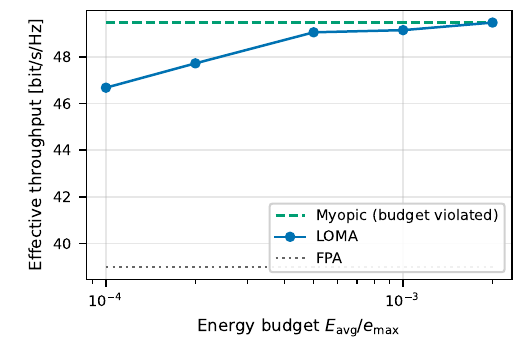}
  \label{fig:eavg_a}}
\\
\subfloat[Recovery, where the band around the curve indicates one standard
error.]{
  \includegraphics[width=0.85\columnwidth]{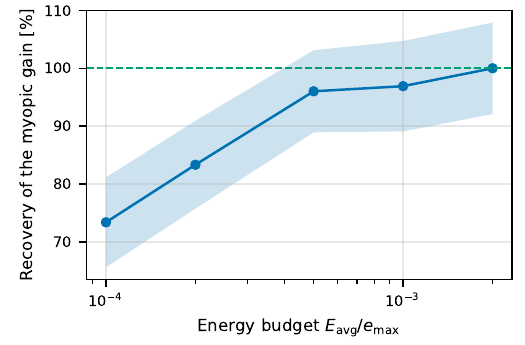}
  \label{fig:eavg_b}}
\caption{Effect of the energy budget, with LOMA meeting the budget. The recovery in (b) is the fraction of the throughput gap between FPA and the myopic design that LOMA attains, so that $0\%$ and $100\%$ correspond to FPA and the myopic design, respectively.}
\label{fig:eavg_sweep}
\end{figure}

\begin{figure}[t]
\centering
\subfloat[Energy consumption relative to the budget (left axis) and virtual
queue of LOMA (right axis).]{
  \includegraphics[width=0.90\columnwidth]{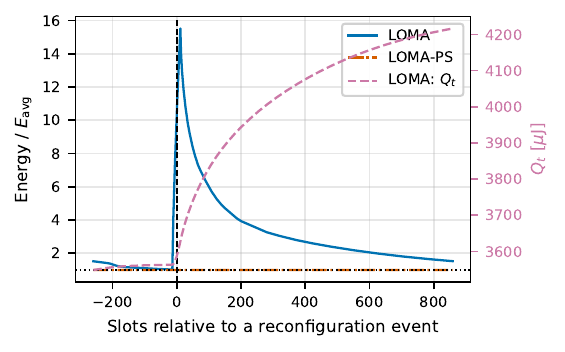}\label{fig:traj_a}}
\\
\subfloat[Effective throughput.]{
\hspace*{-0.097\columnwidth}%
  \includegraphics[width=0.78\columnwidth]{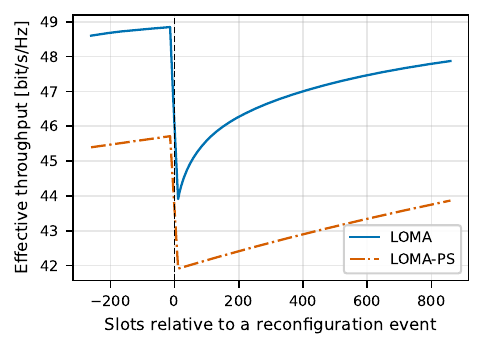}\label{fig:traj_b}}
\caption{Trajectories around a reconfiguration event, averaged over the events that are isolated within the window.}
\label{fig:traj}
\end{figure}


\subsection{Effect of the Design Parameters}\label{subsec:sweeps}

Fig.~\ref{fig:v_sweep} shows the effect of the control parameter $V$. As $V$ increases, LOMA places more weight on the throughput relative to the energy penalty, so that the effective throughput rises toward that of the myopic design. Within the region where the budget is met, this is achieved without moving more antennas: the number of relocated antennas per slot does not increase, as a larger $V$ stabilizes the queue and lets it accumulate enough energy for fewer but larger relocations. The budget is met up to $V=3\times10^{-3}$, which is therefore the operating point used throughout; beyond it the average consumption exceeds the budget, and the energy and the number of relocated antennas grow together. This is exactly the trade-off that the analysis predicts: a larger $V$ admits a larger backlog in Theorem~\ref{thm:energy} and hence a longer interval over which the budget may be exceeded, whereas a smaller $V$ enforces the budget at the cost of a reduced reallocation gain in Proposition~\ref{prop:reallocation}.

Fig.~\ref{fig:eavg_sweep} shows the effect of the energy budget, measured as the fraction of the FPA-to-myopic range that LOMA recovers. When the budget is loose, it is rarely binding and LOMA approaches the myopic performance while consuming a small fraction of the energy that the myopic design requires. As the budget tightens, it becomes the limiting factor and the recovery falls, so that the achievable gain is set by the budget itself rather than by the design.

\subsection{Long-Term Versus Per-Slot Budget}\label{subsec:perslot}

The results so far concern the deployment that motivates this work, in which the channel is reconfigured by discrete events and LOMA outperforms all benchmarks that meet the budget. This subsection examines two further aspects: how much of this gain stems from managing the budget over time rather than from the per-slot solver itself, and how the budget should be managed when this deployment assumption no longer holds and the favorable APV drifts continuously, as for moving terminals. Both are addressed by comparing LOMA with the per-slot mode of the proposed algorithm, referred to as \textbf{LOMA-PS}, which differs from LOMA only in how the budget is imposed.

We refer to the channel of Section~\ref{subsec:setup}, which undergoes only the event-driven variation of Section~\ref{subsec:temporal}, as \emph{Environment~I}, and to a channel that undergoes only the continuous variation, with $v_{\mathrm{UE}}=0.2$~m/s as for a hand-held terminal whose user stays in place, as \emph{Environment~II}. In the latter, the Doppler shift rotates the LoS phase and decorrelates the NLoS gains, so that the favorable APV drifts continuously; the speed is chosen so that this drift stays within the reachable range per slot, beyond which no design could follow it. The demand for repositioning is thus concentrated at the reconfigurations in Environment~I and nearly uniform over time in Environment~II. Since the energy demand is then spread over all slots, LOMA meets the budget only for $V\le3\times10^{-4}$ in Environment~II, which is used there.

LOMA-PS replaces the time-averaged constraint~\eqref{eq:P1_energy} by $e_{t}\le E_{\mathrm{avg}}$ in every slot. Since the smoothing and the MM surrogate concern $\tilde{h}$ alone, Algorithm~\ref{alg:prox_scamm} applies unchanged except for line~\ref{ln:prox}, where~\eqref{eq:prox_qp} is solved without the penalty term and with the additional constraint $\mathbf{1}^{\mathrm{T}}\mathbf{s}\le E_{\mathrm{avg}}/c_{\mathrm{e}}$, and the queue update in line~\ref{ln:queue_update} is skipped. The multiplier $\lambda_{t}\ge0$ of this constraint takes the place of $Q_{t}$ in the threshold $\tau$ of Remark~\ref{rem:sparse}, and by complementary slackness $\lambda_{t}=0$ whenever the budget is not binding, in which case any relocation that improves the objective, however slightly, is performed. LOMA-PS thus always meets the budget but cannot carry unused energy over to later slots, which is the only difference from LOMA.

\begin{table}[t]
\caption{Comparison under event-driven (Environment~I) and continuous
(Environment~II) channel variation; $50$ realizations.}
\label{tab:environments}
\centering
\renewcommand{\arraystretch}{1.15}
\begin{tabular}{llccc}
\hline
\multirow{2}{*}{Env.} & \multirow{2}{*}{Scheme} & Throughput & Energy & Actuations \\
 & & [bit/s/Hz] & / $E_{\mathrm{avg}}$ & per slot \\
\hline
\multirow{4}{*}{I}
  & FPA        & $39.390$ & $0.000$ & $0.000$ \\
  & Opt. fixed & $42.135$ & $0.000$ & $0.000$ \\
  & LOMA-PS    & $46.018$ & $0.978$\pmse{0.008} & $2.322$ \\
  & LOMA       & $\mathbf{47.183}$ & $0.945$\pmse{0.026} & $1.442$ \\
\hline
\multirow{4}{*}{II}
  & FPA        & $38.778$ & $0.000$ & $0.000$ \\
  & Opt. fixed & $42.740$ & $0.000$ & $0.000$ \\
  & LOMA-PS    & $\mathbf{44.790}$ & $0.993$\pmse{0.002} & $2.094$ \\
  & LOMA       & $42.967$ & $0.954$\pmse{0.040} & $0.169$ \\
\hline
\end{tabular}
\end{table}

The first aspect is resolved by Fig.~\ref{fig:traj}, which averages the trajectories of the two designs around the reconfiguration events of Environment~I. Immediately after an event, LOMA spends more than ten times the budget per slot, drawing on the energy saved while the geometry was static, and the resulting rise of the virtual queue is repaid over the remainder of the interval between events. LOMA-PS, in contrast, is capped at the budget in every slot. The antennas of LOMA therefore reach the new favorable positions within a few hundred slots, whereas those of LOMA-PS approach them only gradually and remain short of them even just before the next event, so that LOMA-PS operates below LOMA throughout the interval. Since the two designs share the same solver and consume the same average energy, this gap is due solely to the allocation of the budget over time, which is precisely the reallocation that Proposition~\ref{prop:reallocation} quantifies. Over the steady state, this yields the higher throughput of LOMA over LOMA-PS
in Environment~I of Table~\ref{tab:environments}.

\begin{figure}[t]
\centering
\subfloat[Environment~I.]{
  \includegraphics[width=0.85\columnwidth]{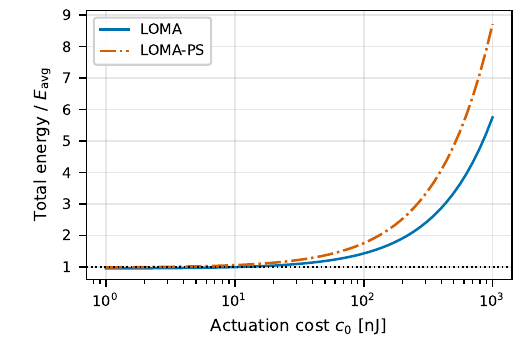}
  \label{fig:act_a}}
\\
\subfloat[Environment~II.]{
  \includegraphics[width=0.85\columnwidth]{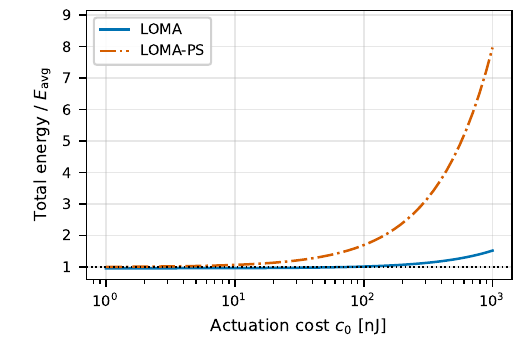}
  \label{fig:act_b}}
\caption{Total energy when a fixed cost $c_{0}$ per actuation is added to the distance-proportional energy of \eqref{eq:energy}. The dotted line marks the budget.}
\label{fig:actuation}
\end{figure}

The second aspect is addressed through the results for Environment~II in Table~\ref{tab:environments}, together with the total energy under a per-actuation cost in Fig.~\ref{fig:actuation}. In Environment~II, the ordering of Environment~I is reversed: LOMA-PS attains the highest throughput, whereas LOMA relocates the antennas in only a small fraction of the slots. The favorable APV now drifts at a steady rate and is best tracked by moving a little in every slot, and the threshold of Remark~\ref{rem:sparse}, which discards relocations whose benefit does not exceed their price, becomes a disadvantage. This reflects the suboptimality of the per-slot decision rule along the trajectory it generates rather than of the formulation itself, since any sequence satisfying the per-slot constraint also satisfies the time-averaged one. LOMA-PS also outperforms the optimized fixed placement, since tracking the drifting APV yields a gain that no static placement can capture, whereas LOMA, which holds the antennas nearly stationary here, performs comparably to the fixed placement.

This throughput advantage, however, is obtained by actuating the antennas an order of magnitude more often, whereas the energy model~\eqref{eq:energy} accounts only for the distance traveled. A stepper motor also expends a fixed amount of energy per actuation for acceleration, deceleration, and settling, which is paid per relocation rather than per unit distance and is absorbed into an average speed in the model of~\cite{ding2025energy}. Writing this cost as $c_{0}$ per actuated antenna, Fig.~\ref{fig:actuation} re-evaluates the measured trajectories under $e_{t}=c_{\mathrm{e}}\|\mathbf{u}_{t}\|_{1}+c_{0}|\{m:u_{t,m}\neq0\}|$, without re-optimizing either design. In Environment~I, where LOMA actuates about $40\%$ fewer antennas per slot, the gap in total energy widens moderately with $c_{0}$. In Environment~II, the total energy of LOMA remains within a few percent of the budget up to $c_{0}=100$~nJ, whereas that of LOMA-PS reaches several times the budget at the largest $c_{0}$ considered. Which mode is preferable under continuous variation thus depends on the actuator: the per-slot mode when the energy is dominated by the distance traveled, and LOMA when a fixed cost per actuation is significant.

For the motivating deployment, managing the budget over time is thus the source of the gain: LOMA concentrates the movement energy right after each reconfiguration and outperforms the per-slot mode at the same average energy. When the channel instead varies continuously, the per-slot mode attains the higher throughput, whereas LOMA actuates the antennas an order of magnitude less often, which pays off when each actuation carries a fixed cost. The two modes are therefore complementary rather than competing. Since both run on the same solver and differ only in how the budget is imposed, the choice between them can follow the expected temporal structure of the channel, and when that structure is unknown, LOMA meets the budget in either case by Theorem~\ref{thm:energy}.

\section{Conclusion}
\label{sec:conclusion}

This paper studied online throughput maximization for MA systems under movement delay and a long-term movement energy budget. By modeling the delay through the maximum displacement and the energy through the total displacement, we formulated a long-term problem that captures the coupling between successive APVs. Using the Lyapunov drift-plus-penalty framework, the problem was reduced to a sequence of per-slot problems that require only the current channel state and the previous APV. To solve the non-convex and non-smooth per-slot problem, we developed a proximal MM algorithm based on LSE smoothing, whose exact proximal update promotes sparse repositioning. We proved that the algorithm ascends monotonically in each slot and that the energy budget is met for every channel realization. Simulation results showed that, when the demand for repositioning is concentrated in time, managing the budget over time improves the throughput over a per-slot budget at the same average energy, whereas under a continuously varying channel the per-slot mode, supported by the same solver, attains a higher throughput at the cost of actuating the antennas an order of magnitude more often. The two modes are thus complementary, and in each regime the one suited to it met the energy budget and clearly outperformed schemes without repositioning. Future work includes incorporating the per-actuation cost into the design and developing non-myopic repositioning policies, e.g., based on deep reinforcement learning combined with the Lyapunov framework, which may exploit the long-term value of repositioning while retaining the energy guarantee.

\bibliographystyle{IEEEtran}
\bibliography{lymarl_refs_journal}

@article{Larsson2014,
  author  = {Larsson, Erik G. and Edfors, Ove and Tufvesson, Fredrik and Marzetta, Thomas L.},
  title   = {Massive {MIMO} for Next Generation Wireless Systems},
  journal = {IEEE Commun. Mag.},
  volume  = {52},
  number  = {2},
  pages   = {186--195},
  month   = feb,
  year    = {2014},
  doi     = {10.1109/MCOM.2014.6736761}
}

@book{bertsekas1999,
  author    = {Dimitri P. Bertsekas},
  title     = {Nonlinear Programming},
  edition   = {2nd},
  publisher = {Athena Scientific},
  address   = {Belmont, MA},
  year      = {1999}
}

@article{zhu2023movable,
  title={Movable antennas for wireless communication: Opportunities and challenges},
  author={Zhu, Lipeng and Ma, Wenyan and Zhang, Rui},
  journal = {IEEE Commun. Mag.},
  volume={62},
  number={6},
  pages={114--120},
  year={2024},
  month={Jun.},
  publisher={IEEE}
}

@article{ding2025energy,
  title={Energy efficiency maximization for movable antenna communication systems},
  author={Ding, Jingze and Zhou, Zijian and Zhu, Lipeng and Zhao, Yuping and Jiao, Bingli and Zhang, Rui},
  journal={IEEE Trans. Wireless Commun.},
  volume={25},
  pages={2624--2638},
  year={2026},
  publisher={IEEE}
}

@article{xiao2024channel,
  title={Channel estimation for movable antenna communication systems: A framework based on compressed sensing},
  author={Xiao, Zhenyu and Cao, Songqi and Zhu, Lipeng and Liu, Yanming and Ning, Boyu and Xia, Xiang-Gen and Zhang, Rui},
  journal={IEEE Trans. Wireless Commun.},
  volume={23},
  number={9},
  pages={11814--11830},
  month={Sep.},
  year={2024},
  publisher={IEEE}
}

@inproceedings{wei2025mechanical,
  title     = {Mechanical Power Modeling and Energy Efficiency Maximization for Movable Antenna Systems},
  author    = {Wei, Xin and Mei, Weidong and Huang, Xuan and Chen, Zhi and Ning, Boyu},
  booktitle = {Proc. IEEE Global Commun. Conf. (GLOBECOM)},
  address   = {Taipei, Taiwan},
  month     = dec,
  year      = {2025},
  pages     = {6117--6122}
}

@article{wei2026energy,
  title={Energy-efficient movable antennas: Mechanical power modeling and performance optimization},
  author={Wei, Xin and Mei, Weidong and Huang, Xuan and Chen, Zhi and Ning, Boyu},
  journal={IEEE Trans. Wireless Commun.},
  volume={25},
  pages={16888--16902},
  year={2026},
  publisher={IEEE}
}

@article{wang2025throughput,
  title={Throughput maximization for movable antenna systems with movement delay consideration},
  author={Wang, Honghao and Wu, Qingqing and Gao, Ying and Chen, Wen and Mei, Weidong and Hu, Guojie and Xu, Lexi},
  journal={IEEE Trans. Wireless Commun.},
  volume={25},
  pages={883--899},
  year={2026},
  publisher={IEEE}
}

@article{li2025trajectory,
  title={Trajectory optimization for minimizing movement delay in movable antenna systems},
  author={Li, Qingliang and Mei, Weidong and Zhang, Rui and Ning, Boyu},
  journal={IEEE Trans. Wireless Commun.},
  volume={25},
  pages={6986--6999},
  year={2026},
  publisher={IEEE}
}

@article{feng2024weighted,
  title={Weighted sum-rate maximization for movable antenna-enhanced wireless networks},
  author={Feng, Biqian and Wu, Yongpeng and Xia, Xiang-Gen and Xiao, Chengshan},
  journal={IEEE Wireless Commun. Lett.},
  volume={13},
  number={6},
  pages={1770--1774},
  month={Jun.},
  year={2024},
  publisher={IEEE}
}

@article{ning2025movable,
  title={Movable antenna-enhanced wireless communications: General architectures and implementation methods},
  author={Ning, Boyu and Yang, Songjie and Wu, Yafei and Wang, Peilan and Mei, Weidong and Yuen, Chau and Bj{\"o}rnson, Emil},
  journal={IEEE Wireless Commun.},
  volume={32},
  number={5},
  pages={108--116},
  month={Oct.},
  year={2025},
  publisher={IEEE}
}

@article{cheng2024sum,
  title={Sum-rate maximization for fluid antenna enabled multiuser communications},
  author={Cheng, Zhenqiao and Li, Nanxi and Zhu, Jianchi and She, Xiaoming and Ouyang, Chongjun and Chen, Peng},
  journal = {IEEE Commun. Lett.},  
  volume={28},
  number={5},
  pages={1206--1210},
  year={2024},
  month={May},
  publisher={IEEE}
}

@article{ma2024multi,
  title={Multi-beam forming with movable-antenna array},
  author={Ma, Wenyan and Zhu, Lipeng and Zhang, Rui},
  journal={IEEE Commun. Lett.},
  volume={28},
  number={3},
  pages={697--701},
  month={Mar.},
  year={2024},
  publisher={IEEE}
}

@inproceedings{neely2010universal,
  title     = {Universal Scheduling for Networks With Arbitrary Traffic, Channels, and Mobility},
  author    = {Neely, Michael J.},
  booktitle = {Proc. 49th IEEE Conf. Decis. Control (CDC)},
  address   = {Atlanta, GA, USA},
  month     = dec,
  year      = {2010},
  pages     = {1822--1829}
}

@article{zhu2025near,
  author  = {Zhu, L. and Ma, W. and Xiao, Z. and Zhang, R.},
  title   = {Movable Antenna Enabled Near-Field Communications: Channel Modeling and Performance Optimization},
  journal = {IEEE Trans. Commun.},
  volume  = {73},
  number  = {9},
  pages   = {7240--7256},
  month   = sep,
  year    = {2025}
}

@article{tang2025secure,
  author  = {Tang, Jun and Pan, Cunhua and Zhang, Yang and Ren, Hong and Wang, Kezhi},
  title   = {Secure {MIMO} Communication Relying on Movable Antennas},
  journal = {IEEE Trans. Commun.},
  volume  = {73},
  number  = {4},
  pages   = {2159--2175},
  month   = apr,
  year    = {2025}
}

@article{zheng2025two,
  author  = {Zheng, Z. and Wu, Q. and Chen, W. and Hu, G.},
  title   = {Two-Timescale Design for Movable Antenna-Enabled Multiuser {MIMO} Systems},
  journal = {IEEE Trans. Commun.},
  volume  = {73},
  number  = {11},
  pages   = {10554--10571},
  month   = {Nov.},
  year    = {2025}
}

@book{beck2017,
  author    = {A. Beck},
  title     = {First-Order Methods in Optimization},
  publisher = {SIAM},
  address   = {Philadelphia, PA},
  year      = {2017}
}

@book{neely2010,
  author    = {Michael J. Neely},
  title     = {Stochastic Network Optimization with Application to Communication and Queueing Systems},
  publisher = {Morgan \& Claypool},
  series    = {Synthesis Lectures on Communication Networks},
  year      = {2010}
}

@article{nesterov2005,
  author  = {Yurii Nesterov},
  title   = {Smooth Minimization of Non-Smooth Functions},
  journal = {Mathematical Programming},
  volume  = {103},
  number  = {1},
  pages   = {127--152},
  year    = {2005}
}

@book{boyd2004,
  author    = {Stephen Boyd and Lieven Vandenberghe},
  title     = {Convex Optimization},
  publisher = {Cambridge University Press},
  year      = {2004}
}
\end{document}